\documentclass[11pt,twoside]{article}

\usepackage{a4}

\usepackage{amssymb,amsmath,amsthm,latexsym}
\usepackage{amsfonts}
  \usepackage{amsfonts}
\usepackage{graphicx}

\usepackage{mathtools,amsthm}
\newtheoremstyle{case}{}{}{}{}{}{:}{ }{}

\usepackage{hyperref}
\usepackage{amsmath, amsfonts}
\usepackage{amssymb, graphicx}
\usepackage{amscd}
\usepackage{textcomp}
\usepackage{palatino}
\usepackage{mathtools}
\usepackage{enumitem}
\usepackage[dvipsnames]{xcolor}
\usepackage{makecell}
\usepackage[pagewise]{lineno}
\newtheorem{theorem}{Theorem}[section]

\newtheorem{definition}[theorem]{Definition}
\newtheorem{example}[theorem]{Example}
\newtheorem{lemma} [theorem]{Lemma}

\newtheorem{remark}[theorem]{Remark}

\numberwithin{subcase}{case}

\voffset=-12mm
\mathsurround=2pt
\parindent=12pt
\parskip= 4.5 pt
\lineskip=3pt
\oddsidemargin=10mm
\evensidemargin=10mm
\topmargin=55pt
\headheight=12pt
\footskip=30pt
\textheight 8.1in
\textwidth=150mm
\raggedbottom
\pagestyle{myheadings}
\hbadness = 10000
\tolerance = 10000

\vspace{5cm}

  \begin{document}
  
  \label{'ubf'}  
\setcounter{page}{1}                                 

\markboth {\hspace*{-9mm} \centerline{\footnotesize \sc
          $g$-Circulant Matrices
         }
                 }
                { \centerline                           {\footnotesize \sc                     
      T. Chatterjee and A. Laha } \hspace*{-9mm}              
               }

\vspace*{-2cm}

\begin{center}
{ 
       { \textbf {On MDS Property of g-Circulant Matrices
                               }
       }
\\

\medskip
{\sc Tapas Chatterjee }\\
{\footnotesize Indian Institute of Technology Ropar, Punjab, India.
}\\
{\footnotesize e-mail: {\it tapasc@iitrpr.ac.in}}
\medskip

{\sc Ayantika Laha }\\
{\footnotesize Indian Institute of Science, Bangalore, India.
}\\
{\footnotesize e-mail: {\it 2018maz0008@iitrpr.ac.in\\ \it ayantika.laha.math@gmail.com}}
\medskip
}
\end{center}

\thispagestyle{empty} 
\vspace{-.4cm}

\hrulefill

\begin{abstract}  
Circulant Maximum Distance Separable (MDS) matrices have gained significant importance due to their applications in the diffusion layer of the AES block cipher. In $2013$, Gupta and Ray established that circulant involutory matrices of order greater than $3$ cannot be MDS over $\mathbb{F}_{2^m}$. This finding prompted a generalization of circulant matrices and the involutory property of matrices by various authors. In $2016$, Liu and Sim introduced cyclic matrices by changing the permutation of circulant matrices. In $1961,$ Friedman introduced $g$-circulant matrices which form a subclass of cyclic matrices. 
In this article, we first discuss $g$-circulant matrices with involutory and MDS properties. We prove that $g$-circulant involutory matrices of order $k \times k$ cannot be MDS unless $g \equiv -1 \pmod k.$
Next, we delve into $g$-circulant semi-involutory and semi-orthogonal matrices with entries from finite fields. We establish that the $k$-th power of the associated diagonal matrices of a $g$-circulant semi-orthogonal (semi-involutory) matrix of order $k \times k$ results in a scalar matrix. These findings extend the recent results on circulant matrices established by Kumar {\it{et al.}} $(2026)$ and Chatterjee {\it{et al.}} $(2022)$. Furthermore, we prove that cyclic matrices of order $2^{d} \times 2^{d}$ over finite fields of characteristic $2$ cannot simultaneously possess both the MDS and semi-orthogonal properties.
\end{abstract}
\hrulefill

{\textbf{Keywords}: g-Circulant Matrices, Involutory Matrices, MDS Matrices, Semi-involutory Matrices, Semi-orthogonal Matrices.}

{\small \textbf{2020 Mathematics Subject Classification.} Primary: 12E20, 15B10, 94A60 ; Secondary: 15B05 }.\\

\vspace{-.37cm}
\section{Introduction}

Symmetric key primitives like block ciphers, stream ciphers, and hash functions rely on various components that provide confusion and diffusion. These two components are important for the overall security and efficiency of the cryptographic scheme. The goal of the confusion layer is to obscure the relationship between the plaintext and the ciphertext, while the diffusion layer spreads the influence of each input symbol in such a way that every output symbol depends on many input symbols. Typically, the diffusion layer of a cipher is implemented using a diffusion matrix, and its strength is evaluated by how effectively it spreads the influence of each input symbol across the output. As a result, constructing diffusion layers that achieve strong diffusion while allowing low-cost implementations remains a key challenge in cipher design. Many block ciphers like AES \cite{DR}, LED \cite{GPPR}, SHARK \cite{RDPB}, SQUARE \cite{DKR} and many hash functions like PHOTON \cite{GPP}, WHIRLPOOL \cite{BR} use maximum distance separable (MDS) matrices in their diffusion layers to achieve optimal diffusion.


There are several approaches to constructing MDS matrices. One important method is the recursive construction, where a companion matrix or a sparse matrix $A$ of order $n \times n$ is used such that $A^n$ is an MDS matrix. This method has been employed in the block cipher LED \cite{GPPR} and the hash function PHOTON \cite{GPP}. 
Another approach consists of direct (non-recursive) constructions, where the objective is to construct matrices that are inherently MDS under certain algebraic conditions, without relying on repeated powers or iterative procedures. In this setting, search-based techniques are commonly used by enumerating matrices within a prescribed search space. At the same time, structured algebraic constructions based on Cauchy matrices, Vandermonde matrices, Hadamard matrices, etc. have been extensively studied in the literature \cite{GR2,GR,GPRS}. 
Among these structured constructions, circulant matrices form an important and well-studied class. The use of circulant matrices for constructing MDS matrices was first proposed by Daemen {\it et al.} in $1998$ for the diffusion layer of the AES block cipher \cite{DR}. Subsequently, in $2003$, Barreto {\it et al.} \cite{BR} applied the same idea to the Whirlpool hash function. Since then, the construction of efficient circulant MDS matrices has attracted considerable attention.

Note that the inverse of a diffusion matrix plays a pivotal role in the decryption process of a substitution-permutation network (SPN) based block cipher. Therefore, an essential property for an MDS matrix is that its inverse can be efficiently implemented, meaning that it can be computed using a minimal number of arithmetic operations (additions and multiplications) and with low memory usage. Consequently, MDS matrices with involutory or orthogonal properties have become particularly important, as these structures allow the inverse to be computed directly or with minimal additional computation, enhancing both the implementability and efficiency of cryptographic applications.
In \cite{GR1, GR2, GR}, Gupta {\it{et al.}} studied circulant MDS matrices with involutory and orthogonal properties over the finite field $\mathbb{F}_{2^m}.$ They proved that circulant orthogonal matrices of order $2^d \times 2^d$ cannot be MDS, and there does not exist any circulant involutory matrix of order $n \geq 3$ which is MDS. In the search for circulant involutory MDS matrices, Gupta {\it et al.} \cite{GR} introduced two types of matrices containing circulant submatrices. Type-I matrices are neither involutory nor orthogonal \cite{GR, GPRS}, whereas Type-II matrices are involutory over the finite field $\mathbb{F}_{2^m}$. Later, in $2016,$ Liu and Sim \cite{LS} provided examples of involutory MDS left-circulant matrices of various orders over the finite fields $GF(2^8)$ and $GF(2^4).$ They also defined cyclic matrices which are a generalization of circulant matrices by changing the permutation. In \cite{TA2}, the authors proved a representation of cyclic matrices using permutation matrices which can be seen as a generalization of the representation of $g$-circulant matrices \cite{D}. Introduced by B. Friedman in $1961$ \cite{F}, $g$-circulant matrices form a subclass of cyclic matrices and also reduced to left-circulant matrices when $g \equiv -1 \pmod{n}$, where $n$ is the matrix order. The study of the MDS property of cyclic matrices was motivated by \cite{LS}, which demonstrated the construction of left-circulant MDS matrices of certain orders. This naturally raises the question of whether similar constructions exist for general cyclic or $g$-circulant matrices. In this context, \cite{TA2} investigated the orthogonal property of $g$-circulant MDS matrices using their general permutation-based representation.

Fiedler {\it{et al.}} introduced the notion of semi-orthogonal matrices in \cite{FH} and Cheon {\it{et al.}} \cite{CCK} introduced the notion of semi-involutory matrices. It is worth noticing that these matrices have simple computable inverses. In \cite{TA}, Chatterjee {\it{et al.}} proved that the Cauchy-based construction of an MDS matrix is semi-orthogonal over a finite field. They also studied circulant matrices in the context of semi-orthogonal and semi-involutory properties. 

Given the extensive research on circulant matrices and the numerous non-existence results regarding MDS, involutory, and orthogonal properties, investigating circulant MDS matrices with semi-involutory or semi-orthogonal properties holds significant promise. The existence of such matrices could lead to efficient implementations of the inverse of circulant matrices, provided that the diagonal matrices corresponding to semi-involutory and semi-orthogonal matrices are efficient, with entries incurring low implementation costs. In \cite{TA3, KSMG}, a comprehensive study of circulant semi-involutory and semi-orthogonal MDS matrices over finite fields of characteristic $2$ was carried out. Meanwhile, in \cite{TA1}, the authors characterized $3 \times 3$ semi-involutory MDS matrices over finite fields of characteristic $2$, while in \cite{KSMG}, a similar characterization was obtained for $3 \times 3$ semi-orthogonal MDS matrices.         

Many authors have continued the search for MDS matrices from finite fields to rings and modules. In $1995$, Zain and Rajan defined MDS codes over cyclic groups \cite{ZR} and Dong {\it{et al.}} characterized MDS codes over elementary Abelian groups \cite{DCG}. By considering a finite Abelian group as a torsion module over a PID, Chatterjee {\it{et al.}} proved some non-existent results of MDS matrices in $2022$ \cite{TAS}. In \cite{CL}, Cauchois {\it{et al.}} introduced $\theta$-circulant matrices over the quasi-polynomial ring. They proposed a construction for $\theta$-circulant almost-involutory MDS matrix over the quasi-polynomial ring.

\section{Contribution}
In \S~\ref{sec:g-circulant MDS involutory} of this article, we begin by establishing the structure of the matrix $A^2$ using permutation matrices, where $A$ is a $g$-circulant matrix. Subsequently, we explain the rationale behind focusing on the condition $g^2 \equiv 1 \pmod k$ to construct an involutory MDS $g$-circulant matrix of order $k \times k.$ We then proceed to prove that, among all solutions of the equation $g^2 \equiv 1 \pmod k$, the feasibility of constructing an involutory MDS matrix of order $k \times k$ is limited to case $g \equiv -1 \pmod k.$
 Additionally, in \S~\ref{sec:g-circulant MDS semi-involutory}, we explore $g$-circulant semi-orthogonal matrices with entries from finite fields. Furthermore, we investigate $g$-circulant semi-involutory matrices. We also prove that $g$-circulant semi-orthogonal matrices of order $2^{d} \times 2^{d}$ over finite fields of characteristic $2$ cannot be MDS. The same conclusion holds for cyclic semi-orthogonal matrices of the same order.

\section{Preliminaries} 
 In this section, we describe the notations and important definitions we use throughout the paper. 
 
We begin with some notations and definitions from \cite{MS}.
Let $\mathbb{F}_q$ denote a finite field with $q$ elements where $q$ is power of a prime $p$ and $\mathbb{F}^*$ denote the non-zero elements of a finite field. Let $\mathcal{C}$ be a $[n, k, d]$ linear error correcting code over the finite field $\mathbb{F}_q$ with length $n$, dimension $k$, and minimum Hamming distance $d.$ The code $\mathcal{C}$ is a $k$ dimensional subspace of $\mathbb{F}_q^n.$ The generator matrix $G$ of $\mathcal{C}$ is a $k \times n$ matrix with the standard form $[I|A]$, where $I$ is a $k \times k$ identity matrix and $A$ is a $k \times n-k$ matrix.  The Singleton bound states that, for an $[n,k,d]$ code, $n-k \geq d-1.$ An $[n,k,n-k+1]$ code is called a maximum distance separable (MDS) code. Another definition of an MDS code in terms of generator matrix, as given in \cite{MS}, is the following. 
\begin{theorem}
An $[n,k,d]$ code $\mathcal{C}$ with the generator matrix $G=[I|A]$, where $A$ is a $k \times (n-k)$ matrix, is MDS if and only if every $i \times i$ submatrix of $A$ is non-singular, $i=1,2,\hdots,\text{min}(k,n-k).$
\end{theorem}
 This definition of MDS code gives the following characterization of an MDS matrix.

\begin{definition}
A square matrix $A$ is said to be MDS if every square submatrix of $A$ is non-singular.
\end{definition}

MDS matrices with efficiently implementable inverses are useful because of the use of inverse matrix in the decryption layer of an SPN based block cipher. One approach to achieve efficient implementation is to use MDS matrices with either involutory or orthogonal property. Here, $A^{-1}$ denote the inverse of $A$, $A^{T}$ denote the transpose of $A$ and $I$ is the identity matrix.

\begin{definition}
A square matrix $A$ is said to be involutory if $A^2=I$ and orthogonal if $AA^T=A^TA=I.$
\end{definition}

In $2012,$ Fielder {\it{et al.}} generalized the orthogonal property of matrices to semi-orthogonal in \cite{FH}. The definition of a semi-orthogonal matrix is as follows.

\begin{definition}
A non-singular matrix $M$ is semi-orthogonal if there exist non-singular diagonal matrices $D_1$ and $D_2$ such that
$M^{-T}=D_1MD_2$, where $M^{-T}$ denotes the transpose of the matrix $M^{-1}.$
\end{definition}
The following property concerning semi-orthogonal matrices and permutation matrices was proved by Fiedler {\it{et al.}} in \cite{FH}. Note that, a permutation matrix is a square matrix that is obtained by permuting the rows (columns) of the identity matrix.  Moreover, permutation matrices are orthogonal.

\begin{lemma}\label{per equi s.o}
If $A$ is semi-orthogonal and $P$ is a permutation matrix, then both $PA$ and $AP$ are semi-orthogonal.
\end{lemma}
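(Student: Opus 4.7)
The plan is to verify the defining identity $M^{-T} = D_1 M D_2$ directly for $M = PA$ and $M = AP$, starting from the corresponding identity $A^{-T} = D_1 A D_2$ that is assumed to hold for $A$. The key algebraic fact I will exploit throughout is that a permutation matrix $P$ satisfies $P^{-1} = P^T$, and conjugation of a diagonal matrix by a permutation matrix is again a diagonal matrix (the diagonal entries are merely permuted). This last observation is what allows me to ``move'' the permutation past the diagonal factors while keeping everything diagonal.

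For the case $M = PA$, I would compute
\[
(PA)^{-T} = (A^{-1} P^{-1})^T = P^{-T} A^{-T} = P A^{-T} = P D_1 A D_2,
\]
and then insert $P^{-1} P = I$ between $D_1$ and $A$ to obtain
\[
(PA)^{-T} = (P D_1 P^T)(PA) D_2.
\]
Setting $\widetilde{D}_1 = P D_1 P^T$, which is diagonal and non-singular because $D_1$ is, and $\widetilde{D}_2 = D_2$, shows that $PA$ is semi-orthogonal with associated diagonals $(\widetilde{D}_1, \widetilde{D}_2)$.

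The case $M = AP$ is entirely symmetric. I would compute
\[
(AP)^{-T} = A^{-T} P^{-T} = D_1 A D_2 P,
\]
and then insert $P P^{-1} = I$ between $A$ and $D_2$ to write
\[
(AP)^{-T} = D_1 (AP)(P^T D_2 P),
\]
so that $AP$ is semi-orthogonal with diagonals $D_1$ and $P^T D_2 P$.

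There is no real obstacle here; the proof is a three-line manipulation. The only thing that has to be observed carefully is that $P D P^T$ is diagonal whenever $D$ is diagonal and $P$ is a permutation matrix, which follows because this operation simply reindexes the entries of $D$ along the diagonal. Everything else is a mechanical use of the identities $(XY)^{-1} = Y^{-1}X^{-1}$, $(XY)^T = Y^T X^T$, and $P^{-1} = P^T$.
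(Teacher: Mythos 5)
Your argument is correct: both identities $(PA)^{-T}=(PD_1P^T)(PA)D_2$ and $(AP)^{-T}=D_1(AP)(P^TD_2P)$ follow exactly as you compute, and the conjugates $PD_1P^T$, $P^TD_2P$ are indeed non-singular diagonal matrices since conjugation by a permutation matrix only permutes the diagonal entries. The paper does not prove this lemma itself but cites it from Fiedler and Hall, so your direct verification supplies a self-contained proof of precisely the standard kind; the only (trivial) point left implicit is that $PA$ and $AP$ are non-singular, which is immediate from the non-singularity of $A$ and $P$.
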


Following that, in $2021,$ Cheon {\it{et al.}} \cite{CCK}  defined semi-involutory matrices as a generalization of the involutory matrices. The definition of a semi-involutory matrix is as follows.
\begin{definition}
A non-singular matrix $M$ is said to be semi-involutory if there exist non-singular diagonal matrices $D_1$ and $D_2$ such that $M^{-1} = D_1MD_2.$  
\end{definition}
An analogous result to Lemma \ref{per equi s.o} has been established by Cheon {\it{et al.}}, as follows.
\begin{lemma}
$A$ is semi-involutory if and only if $P^TAP$ is semi-involutory for any permutation matrix $P.$
\end{lemma}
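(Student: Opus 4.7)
The plan is to prove the equivalence directly by manipulating the defining identity $A^{-1}=D_1AD_2$ under conjugation by $P$, using two elementary facts about permutation matrices: first, $P^{-1}=P^T$ (so $PP^T=P^TP=I$), and second, conjugation of a diagonal matrix by a permutation matrix yields another diagonal matrix (the conjugation merely permutes the diagonal entries). Both facts are standard and require no separate argument.

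For the forward direction, suppose $A$ is semi-involutory with $A^{-1}=D_1AD_2$ for non-singular diagonal $D_1,D_2$. Set $B=P^TAP$. Then
\[
B^{-1}=P^{-1}A^{-1}P^{-T}=P^TA^{-1}P=P^TD_1AD_2P=(P^TD_1P)(P^TAP)(P^TD_2P),
\]
where I have inserted $PP^T=I$ twice. Letting $E_i:=P^TD_iP$, each $E_i$ is diagonal and non-singular, so $B^{-1}=E_1BE_2$, showing $B=P^TAP$ is semi-involutory.

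For the reverse direction, suppose $P^TAP$ is semi-involutory for some permutation matrix $P$. Observe that $P^T$ is itself a permutation matrix, and applying the forward direction with the permutation $P^T$ to the semi-involutory matrix $P^TAP$ gives that
\[
(P^T)^T(P^TAP)(P^T)=P(P^TAP)P^T=A
\]
is semi-involutory. This completes the equivalence.

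The main obstacle, if any, is bookkeeping: making sure the conjugation $P^T D_i P$ is handled correctly and that one explicitly invokes that it stays diagonal and non-singular. Otherwise the argument is a direct computation with no hidden subtleties; in particular, no appeal to the structure of $A$ beyond its semi-involutory identity is needed.
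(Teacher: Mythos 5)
Your proof is correct. Note that the paper does not actually prove this lemma---it states it without proof, attributing it to Cheon \emph{et al.}\ \cite{CCK}---so there is no in-paper argument to compare against; your direct conjugation computation, inserting $PP^T=I$ and observing that $P^TD_iP$ remains diagonal and non-singular, is the standard and complete way to establish it, and your reverse direction (applying the forward direction to the permutation $P^T$) closes the equivalence cleanly.
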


Similar to the semi-orthogonal and semi-involutory properties, the MDS property is invariant under permutation equivalence. The following result, established in \cite{GPRS}, formalizes this observation.
\begin{lemma}\label{mds PE}
If $A$ is an MDS matrix, then for any permutation matrices $P$ and $Q$,
$PAQ$ is an MDS matrix.
\end{lemma}

As previously mentioned, circulant matrices find application in the diffusion layer. In this context, we now provide definitions for circulant matrices and their generalizations. Let $A$ be an $n\times n$ matrix. The $i$-th row of $A$ is denoted by $R_i$ for $0 \leq i \leq n-1$ and the $j$-th column as $C_j$ for $0 \leq j \leq n-1.$ Furthermore, $A[i,j]$ denotes the entry at the intersection of the $i$-th row and $j$-th column. The definition of a circulant matrix is the following. 

\begin{definition}
The square matrix of the form $\begin{bmatrix}
c_0 & c_1 & c_2 & \cdots & c_{k-1}\\
c_{k-1} & c_0 & c_1 & \cdots & c_{k-2}\\
\vdots & \vdots & \vdots & \cdots & \vdots\\
c_1 & c_2 & c_3 & \cdots & c_0
\end{bmatrix}$ is said to be circulant matrix and denoted by $\mathcal{C}=$ circulant$(c_0 , c_1 , c_2 , \hdots , c_{k-1})$ On the other hand, the square matrix of the form $\begin{bmatrix}
c_0 & c_1 & c_2 & \cdots & c_{k-1}\\
c_{1} & c_2 & c_3 & \cdots & c_{0}\\
\vdots & \vdots & \vdots & \cdots & \vdots\\
c_{k-1} & c_0 & c_1 & \cdots & c_{k-2}
\end{bmatrix}$ is said to be left-circulant matrix and denoted by left-circulant$(c_0 , c_1 , c_2 , \hdots , c_{k-1}).$
\end{definition}

 The entries of the circulant matrix $\mathcal{C}$can be expressed as $\mathcal{C}[i,j]=c_{j-i+1},$ where subscripts are calculated modulo $k.$
The representation of circulant matrices using permutation matrices is as follows: 
\begin{eqnarray}\label{circulant structure}
\mathcal{C}= \text{circulant} (c_0 , c_1 , c_2 , \hdots , c_{k-1})=c_0I+c_1P+c_2P^2+\cdots +c_{k-1}P^{k-1},
\end{eqnarray}
where $I$ denotes the $k \times k$ identity matrix and $P=$ circulant$(0,1,0,\hdots,0)$, a permutation matrix of order $k \times k.$

In \cite{TA}, Chatterjee {\it{et al.}} proved the following two properties of the associated diagonal matrices of circulant semi-orthogonal and semi-involutory matrices with entries from non-zero elements over any finite field. Note that, the additional observation that all entries of the matrix $A$ must be non-zero was made in Remark $5$ of \cite{KSMG}.
\begin{theorem}\label{circ semi-inv}
Let $A$ be an $k \times k$ circulant matrix over a finite field. Then $A$ is semi-involutory if and only if there exist non-singular diagonal matrices $D_1,D_2$ such that $D_1^k=k_1I$ and $D_2^k=k_2I$ for non-zero scalars  $k_1,k_2$ in the finite field, and $A^{-1}=D_1AD_2.$ 
\end{theorem}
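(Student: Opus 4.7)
The backward implication is immediate: any decomposition $A^{-1} = D_1 A D_2$ with diagonal non-singular $D_1, D_2$ already makes $A$ semi-involutory by definition, and the extra hypothesis that the $k$-th powers are scalar is not used. The substance lies in the forward direction, where the plan is to combine the semi-involutory identity with the circulant structure of both $A$ and $A^{-1}$.

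Assume $A$ is a circulant semi-involutory matrix and fix a decomposition $A^{-1} = D_1 A D_2$ with $D_1 = \operatorname{diag}(a_0, \ldots, a_{k-1})$ and $D_2 = \operatorname{diag}(b_0, \ldots, b_{k-1})$. Since $A$ is circulant, so is $A^{-1}$; write $A = \sum_{d} c_d P^d$ and $A^{-1} = \sum_{d} c'_d P^d$ via the representation~\eqref{circulant structure}, so that $A[i,j] = c_{j-i}$ and $A^{-1}[i,j] = c'_{j-i}$ with indices reduced modulo $k$. Reading $A^{-1} = D_1 A D_2$ entrywise gives $a_i c_{j-i} b_j = c'_{j-i}$, and at any position $(i,j)$ with $c_{j-i} \neq 0$ this collapses to $a_i b_{i+d} = c'_d / c_d =: \alpha_d$, a quantity that is independent of $i$.

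From here the plan is to extract a multiplicative recurrence for the $a_i$. Picking any two offsets $d, d'$ in the support $S = \{d : c_d \neq 0\}$ and comparing $a_i b_{i+d} = \alpha_d$ with $a_{i+d-d'} b_{i+d} = \alpha_{d'}$ yields $a_{i + (d - d')}/a_i = \alpha_{d'}/\alpha_d$, independent of $i$. In the MDS situation this recurrence, applied with $d' = d - 1$, makes $a_{i+1}/a_i$ equal to a single constant $\rho \in \F^\times$ for every $i$; cyclic periodicity $a_0 = a_k = \rho^k a_0$ then forces $\rho^k = 1$, and consequently $a_i^k = \rho^{ik} a_0^k = a_0^k$ for all $i$, giving $D_1^k = a_0^k I$. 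The parallel argument on the $b_j$ gives $D_2^k = b_0^k I$. For a non-MDS circulant whose support generates only a proper subgroup $H = \langle S-S \rangle < \Z/k$, the same analysis only shows $a_i^k$ is constant on each coset of $H$; one then absorbs the coset-to-coset discrepancies into a diagonal corrector $F_1$ (constant on cosets of $H$) paired with a $F_2$ satisfying $F_1 A F_2 = A$, using that each ratio $a_{i_0}^k / a_i^k$ is by construction a $k$-th power in $\F^\times$, so the correction $F_1[i,i] := a_{i_0}/a_i$ is legitimate and makes the $k$-th powers of both modified diagonals scalar.

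The main obstacle I anticipate is bookkeeping rather than conceptual. One has to verify that $\alpha_{d'}/\alpha_d$ depends only on $d - d'$ (so the recurrence is consistent), check that any corrector $F_1$ is compatible with a corresponding $F_2$ satisfying $F_1 A F_2 = A$, and confirm that the conjugate $k$-th-power scalar condition for $D_2$ is automatic once $D_1^k$ has been arranged. All three checks are direct consequences of the entrywise identity above together with the single arithmetic fact $|H| \mid k$, which is what produces the $k$-th-power collapse throughout.
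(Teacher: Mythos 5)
This theorem is imported from \cite{TA} and is not reproved in the present paper; the nearest in-house argument is the proof of its $g$-circulant generalization in \S~\ref{sec:g-circulant MDS semi-involutory}, which contains this statement as the case $g=1$. Your proposal is correct in substance but follows a genuinely different route. The paper's argument takes an \emph{arbitrary} decomposition $A^{-1}=D_1AD_2$, writes down the $k$ entrywise identities relating consecutive rows (respectively columns) of $A^{-1}$, and multiplies all of them so that the products of the matrix entries and of the entries of $D_2$ telescope, leaving $d_0^k=d_1^k=\cdots$. You instead \emph{divide} pairs of entrywise identities to get the multiplicative recurrence $a_{i+(d-d')}/a_i=\alpha_{d'}/\alpha_d$ on the support, deduce $\rho^k=1$ from cyclic periodicity, and only then pass to $k$-th powers. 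The two arguments agree when every $c_d$ is nonzero (in particular in the MDS case), but yours is the more careful one: the paper's cancellation step silently requires all first-row entries to be nonzero, and the stronger ``every decomposition works'' conclusion actually fails for proper support --- e.g.\ $A=c_0I+c_{k/2}P^{k/2}$ over characteristic $2$ admits decompositions with $D_1^k$ non-scalar --- so your corrector pair $F_1,F_2$ with $F_1AF_2=A$ is exactly what is needed to salvage the existence statement in that regime. One small repair is needed: as written, $F_1[i,i]=a_{i_0}/a_i$ is \emph{not} constant on cosets of $H$, since $a_i$ varies by the ratios $a_{i+h}/a_i$ within a coset. You should instead set $F_1[i,i]=a_{i_0}/a_{r(i)}$ for a fixed representative $r(i)$ of the coset of $i$; this is constant on cosets by construction and still satisfies $F_1[i,i]^k=a_{i_0}^k/a_i^k$ because you have already shown that $a_i^k$ is constant on each coset. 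With that adjustment, the three checks you flag at the end all go through, including the automatic scalarity of the corrected $D_2^k$ via $\tilde b_{i+d}=\alpha_d/\tilde a_i$.
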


\begin{theorem}\label{circ semi-ortho}
Let $A$ be a $k \times k$ circulant matrix over a finite field $\mathbb{F}$. Then $A$ is semi-orthogonal if and only if there exist non-singular diagonal matrices $D_1$ and $D_2$ such that $D_1^k=k_1I$ and $D_2^k=k_2I$ for non-zero scalars  $k_1,k_2 \in \mathbb{F}$ and $A^{-T}=D_1AD_2.$
\end{theorem}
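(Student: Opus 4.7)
The ``if'' direction is immediate from the definition of semi-orthogonality, so I focus on the forward implication. Assume $A$ is a circulant matrix with $A^{-T} = D_1 A D_2$ for some non-singular diagonal matrices $D_1 = \operatorname{diag}(d_0,\ldots,d_{k-1})$ and $D_2 = \operatorname{diag}(e_0,\ldots,e_{k-1})$. My plan is to turn the circulant structure into multiplicative constraints on the $d_j$ and $e_j$, and then to use the inherent freedom in the pair $(D_1,D_2)$ to rescale so that $D_1^k$ and $D_2^k$ become scalar multiples of $I$.

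Using the circulant representation $A = \sum_{i=0}^{k-1} c_i P^i$, note that $A^T = \sum_i c_i P^{k-i}$ is again circulant, hence so is $A^{-T}$; write $A^{-T} = \sum_i c'_i P^i$. Equating entries at position $(j,l)$ in $A^{-T} = D_1 A D_2$ and writing $i = l - j$ gives $c'_i = d_j\, c_i\, e_{j+i}$ for every $j$. Hence $\operatorname{supp}(A^{-T}) = \operatorname{supp}(A) =: S$, and for each $i \in S$ the product $d_j e_{j+i}$ is a $j$-independent constant, which I will call $\beta_i$.

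Fixing $i_0 \in S$ yields $e_m = \beta_{i_0}/d_{m-i_0}$ and, for every $i \in S$, the ratio $d_{j+(i-i_0)}/d_j = \beta_{i_0}/\beta_i$ is constant in $j$. Thus $s \mapsto d_{j+s}/d_j$ extends multiplicatively to a character $\chi$ on the subgroup $\langle S - i_0 \rangle = d\mathbb{Z}/k\mathbb{Z}$ of $\mathbb{Z}/k\mathbb{Z}$, with $d := \gcd((S-S) \cup \{k\})$. Setting $\mu := \chi(d)$, the identity $\chi(k) = 1$ forces $\mu^{k/d} = 1$, so $\mu^k = 1$, and hence $(d_j^k)$ is $d$-periodic in $j$; the same holds for $(e_j^k)$.

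If $d = 1$ (which includes the case where $A$ is MDS and so $S = \{0,\ldots,k-1\}$), this already gives $D_1^k = k_1 I$ and $D_2^k = k_2 I$. In general I would exploit the freedom in the relation $A^{-T} = D_1 A D_2$: any other valid pair takes the form $(D_1 M^{-1}, N D_2)$ for diagonal $(M,N)$ satisfying $MA = AN$, and an entry-wise analysis of this latter equation shows that $N$ must be $d$-periodic on its diagonal with $m_j = n_{j+i_0}$, giving exactly $d$ scalar degrees of freedom. Choosing $m_b = d_b$ for $b = 0,\ldots,d-1$ and extending $d$-periodically renormalises the entries of $D'_1 := D_1 M^{-1}$ so that each is a power of $\mu$, whence $(D'_1)^k = I$; a parallel computation yields $(D'_2)^k = \beta_{i_0}^k I$. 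The main obstacle in this plan is the combinatorial bookkeeping around the period $d$, especially when $0 \notin S$ so that $\langle S \rangle$ can strictly contain $\langle S - S \rangle$; that forces the whole argument to be phrased in terms of differences rather than $S$ itself, and verifying that the renormalised pair still satisfies the semi-orthogonality equation is routine but delicate.
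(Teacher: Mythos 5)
Your proof is correct. Note, however, that this paper does not actually prove Theorem \ref{circ semi-ortho}: it imports it from \cite{TA}, so the only in-paper point of comparison is the analogous $g$-circulant semi-involutory argument in \S~\ref{sec:g-circulant MDS semi-involutory}. That argument equates consecutive rows (and columns) of $D_1AD_2$ with shifts of one another and multiplies the resulting $k$ identities so that the products of the first-row entries and of the $e_i$'s cancel, giving $d_0^k=d_1^k=\cdots$ directly for the \emph{given} pair. This cancellation tacitly assumes every first-row entry is nonzero; if some $c_i=0$ the corresponding identity degenerates to $0=0$ and the product argument yields nothing. Your route --- restricting the relations $d_je_{j+i}=\beta_i$ to the support $S$, showing $s\mapsto d_{j+s}/d_j$ is a character on $\langle S-S\rangle=d\,\mathbb{Z}/k\mathbb{Z}$ with $\mu^{k/d}=1$, and then renormalizing by a pair $(M,N)$ with $MA=AN$ --- handles arbitrary nonsingular circulants, and it correctly recognizes that the renormalization is not optional when $d>1$: for $A=cI$ over $\mathbb{F}_4$ with $k=2$, the valid pair $D_1=\mathrm{diag}(1,\alpha)$, $D_2=c^{-2}D_1^{-1}$ has $D_1^2$ non-scalar, so the conclusion holds only for a suitably adjusted pair, exactly as your construction provides. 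What the paper's style buys is brevity in the generic case $d=1$ (in particular whenever $A$ is MDS, the setting of interest here), where your argument also terminates immediately after the character step; what yours buys is a complete proof of the statement as literally written, with no nonvanishing hypothesis on the first row.
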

In $1961,$ Friedman introduced a generalization of the circulant matrix in \cite{F}, termed the $g$-circulant matrix. In this matrix, each row (except the first) is derived from the previous row by cyclically shifting the elements by $g$ columns to the right. The formal definition is provided below.

 \begin{definition}\label{g-circ def}
 A $g$-circulant matrix of order $k \times k$ is a matrix of the form $A=$ g-circulant$(c_0,c_1,\hdots,c_{k-1})=\begin{bmatrix}
c_0 & c_1 & \cdots & c_{k-1}\\
c_{k-g} & c_{k-g+1} & \cdots & c_{k-1-g}\\
c_{k-2g} & c_{k-2g+1} & \cdots & c_{k-1-2g}\\
\vdots& \vdots &\cdots&\vdots\\
c_{g} & c_{g+1} & \cdots & c_{g-1}\\
\end{bmatrix}$, where all subscripts are taken modulo $k.$ 
\end{definition}
Entries of a $g$-circulant matrix satisfy the relation $A[i,j]=A[i+1,j+g]$, where subscripts are calculated modulo $k.$  Moreover, for a $g$-circulant matrix $A=(a_{i,j}),~0 \leq i,j \leq k-1$ with first row $(c_0,c_1,\hdots,c_{k-1})$, we have $A[i,j]=c_{j-ig \pmod k}.$

For $g=1$, a $g$-circulant matrix represents a circulant matrix, and for $g\equiv -1 \pmod{k}$, it takes the form of a left-circulant matrix.
Some noteworthy properties of $g$-circulant matrices are provided in \cite{AB, D}.

\begin{lemma}\label{gh-circulant}
Let $A$ be $g$-circulant and $B$ $h$-circulant. Then $AB$ is $gh$-circulant.
\end{lemma}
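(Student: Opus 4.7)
The plan is to prove Lemma~\ref{gh-circulant} by direct computation of the matrix product using the entrywise characterization of a $g$-circulant matrix given in the paragraph just before Definition~\ref{g-circ def}: namely, that a $g$-circulant matrix $A$ of order $k\times k$ with first row $(c_0,\ldots,c_{k-1})$ satisfies
\[
A[i,j] = c_{j - ig \pmod{k}}, \qquad 0 \le i,j \le k-1.
\]
Equivalently (and this is the property I want to verify for $AB$), a matrix $M$ is $g$-circulant if and only if $M[i+1,\,j+g] = M[i,j]$ for all $i,j$, where the indices are read modulo $k$. So the strategy is: compute $(AB)[i+1, j+gh]$, do a change of summation variable, and recover $(AB)[i,j]$.

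Concretely, write $A[i,t] = a_{t-ig}$ and $B[t,j] = b_{j-th}$ with all subscripts modulo $k$, where $(a_0,\ldots,a_{k-1})$ and $(b_0,\ldots,b_{k-1})$ are the first rows of $A$ and $B$ respectively. First I would expand
\[
(AB)[i,j] = \sum_{t=0}^{k-1} a_{t-ig}\, b_{j-th},
\]
and then write down the analogous expression for the shifted entry:
\[
(AB)[i+1,\, j+gh] = \sum_{t=0}^{k-1} a_{t-(i+1)g}\, b_{j+gh-th}.
\]
Next I would substitute $s = t-g \pmod{k}$ in the second sum; as $t$ ranges over $\Z/k\Z$ so does $s$. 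This substitution turns $a_{t-(i+1)g}$ into $a_{s-ig}$ and $b_{j+gh-th}$ into $b_{j+gh-(s+g)h} = b_{j-sh}$, so that the sum becomes exactly $(AB)[i,j]$.

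Since $(AB)[i+1,\,j+gh] = (AB)[i,j]$ for all $i,j$ (modulo $k$), the matrix $AB$ satisfies the shift relation that characterizes a $gh$-circulant matrix, and reading off the first row $(AB)[0, j] = \sum_t a_t\, b_{j-th}$ completes the proof. There is essentially no obstacle here beyond bookkeeping: the only delicate point is making sure the reindexing $s = t-g$ is a genuine bijection on $\Z/k\Z$, which it is because $g$ is a fixed residue class modulo $k$, so translation by $-g$ permutes $\Z/k\Z$. If one wanted a cleaner conceptual proof I would instead invoke the factorization $A = C_a Q_g$ and $B = C_b Q_h$ (where $Q_g$ is the $g$-circulant permutation matrix with first row $(1,0,\ldots,0)$ and $C_a, C_b$ are circulant), noting $Q_g Q_h = Q_{gh}$ and that a circulant times $Q_{gh}$ is $gh$-circulant; but since the explicit entrywise representation is already stated in the paper, the direct summation argument is shortest and most self-contained.
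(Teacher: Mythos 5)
Your direct computation is correct: with $A[i,t]=a_{t-ig}$ and $B[t,j]=b_{j-th}$ (indices mod $k$), the substitution $t\mapsto t-g$ is a translation of $\Z/k\Z$ and hence a bijection for any residue $g$, and it cleanly yields $(AB)[i+1,\,j+gh]=(AB)[i,j]$, which is exactly the defining shift relation of a $gh$-circulant matrix. Note that the paper does not prove this lemma at all --- it is quoted as a known fact from the references of Ablow--Brenner and Davis --- so there is no in-paper argument to compare against; your entrywise verification is a legitimate, self-contained substitute, and it has the merit of requiring no coprimality hypothesis on $g$ or $h$. The classical proof in the cited sources is the one you gesture at in your last sentence, but its cleanest form goes through Lemma~\ref{PA=AP^g} rather than a factorization: from $PA=AP^{g}$ and $PB=BP^{h}$ one gets $P(AB)=AP^{g}B=A(P^{g}B)=ABP^{gh}$, so $AB$ is $gh$-circulant. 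Be slightly careful with the alternative you actually wrote down, namely $A=C_aQ_g$ with $Q_g$ a permutation matrix: the paper's representation \eqref{g-circulant structure} is $A=\sum_i c_iQ_gP^i$, i.e.\ $Q_g$ on the left, and $Q_g$ is a permutation matrix only when $\gcd(g,k)=1$, so that route needs hypotheses your main argument does not. As written, your primary proof stands.
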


\begin{lemma}\label{PA=AP^g}
$A$ is $g$-circulant if and only if $PA=AP^g$ where $P$ is the permutation matrix $P=$ circulant$(0,1,0,\hdots,0).$
\end{lemma}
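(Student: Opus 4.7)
The plan is to verify the identity $PA = AP^g$ by computing the $(i,j)$-entries of both sides and recognizing that equality of these entries, for all $i$ and $j$, is equivalent to the defining relation $A[i,j] = A[i+1, j+g]$ of a $g$-circulant matrix.

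First, I would analyze the action of the permutation matrix $P = \text{circulant}(0,1,0,\ldots,0)$. Its entries satisfy $P[i,\ell] = 1$ precisely when $\ell \equiv i+1 \pmod k$, and $0$ otherwise. Multiplying on the left therefore performs a cyclic shift of the rows, giving $(PA)[i,j] = A[i+1, j]$, with all indices computed modulo $k$. Similarly, I would iterate and get $P^g[\ell, j] = 1$ if and only if $\ell \equiv j - g \pmod k$, so right multiplication by $P^g$ produces a cyclic shift of the columns by $g$, yielding $(AP^g)[i,j] = A[i, j-g]$.

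With these two expressions in hand, the condition $PA = AP^g$ reduces entry-wise to
\begin{equation*}
A[i+1, j] = A[i, j-g] \quad \text{for all } 0 \le i, j \le k-1,
\end{equation*}
where indices are taken modulo $k$. Substituting $j \mapsto j+g$ transforms this into $A[i+1, j+g] = A[i, j]$, which is exactly the defining relation of a $g$-circulant matrix as recorded in Definition \ref{g-circ def}. Conversely, if $A$ is $g$-circulant, the same relation gives $A[i+1, j] = A[i, j-g]$ and hence $PA = AP^g$.

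I do not anticipate any genuine obstacle; the argument is a direct index computation. The only place requiring care is the indexing convention for $P$ and $P^g$, specifically verifying that $P^g$ performs a column shift by exactly $g$ (and not $-g$), which follows by induction from the single-step shift formula for $P$ and the associativity of matrix multiplication.
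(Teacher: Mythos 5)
Your proof is correct. The paper states Lemma \ref{PA=AP^g} without proof (it is imported from the literature on circulant matrices), and your entrywise computation $(PA)[i,j]=A[i+1,j]$, $(AP^g)[i,j]=A[i,j-g]$, reduced via $j\mapsto j+g$ to the defining relation $A[i,j]=A[i+1,j+g]$, is exactly the standard argument and is consistent with the paper's indexing conventions.
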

The classification of $g$-circulant matrices of order $k \times k$ is divided into two types depending on $\gcd(k,g).$ In this article, we only consider the case $\gcd(k,g)=1$ because, if $\gcd(k,g) >1$, $g$-circulant matrices cannot be MDS as proved in Theorem $4.1$ of \cite{TA2}. For the scenario where $\gcd(g,k)=1$, the inverse of a non-singular $g$-circulant matrix exhibits a specific characteristic, as proven in \cite{D}.  This characteristic is noted in the following lemma:
\begin{lemma}\label{A-inv g-inv circ}
Let $A$ be a non-singular $g$-circulant matrix of order $k \times k$ with $\gcd(g,k)=1.$ Then $A^{-1}$ is $g^{-1}$-circulant, where $g^{-1}$ denotes the multiplicative inverse of
$g$ modulo $k$.
\end{lemma}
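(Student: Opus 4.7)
The plan is to reduce the statement to the permutation-matrix characterization of $g$-circulant matrices given in Lemma \ref{PA=AP^g}. That lemma says $A$ is $g$-circulant if and only if $PA = AP^{g}$, so to show $A^{-1}$ is $g^{-1}$-circulant it suffices to verify $PA^{-1} = A^{-1}P^{g^{-1}}$, where $g^{-1}$ denotes the multiplicative inverse of $g$ modulo $k$ (which exists exactly because $\gcd(g,k)=1$).

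First I would start from the identity $PA = AP^{g}$ furnished by Lemma \ref{PA=AP^g}. Multiplying on the left and on the right by $A^{-1}$ (which exists by nonsingularity), I obtain the conjugated relation $A^{-1}P = P^{g}A^{-1}$. Then I iterate this relation: a short induction on $m\ge 0$ gives
\begin{equation*}
A^{-1} P^{m} \;=\; P^{gm}\, A^{-1}
\end{equation*}
for every nonnegative integer $m$, with all exponents read modulo $k$ since $P^{k}=I$.

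Next I specialise $m$ to the modular inverse of $g$. Because $\gcd(g,k)=1$, there is an integer $g^{-1}$ with $g\cdot g^{-1}\equiv 1\pmod{k}$. Substituting $m=g^{-1}$ into the iterated relation gives $A^{-1}P^{g^{-1}} = P^{gg^{-1}}A^{-1} = PA^{-1}$, i.e.\ $PA^{-1} = A^{-1}P^{g^{-1}}$. By the converse direction of Lemma \ref{PA=AP^g}, this is exactly the condition that $A^{-1}$ be $g^{-1}$-circulant.

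I do not foresee a serious obstacle: the argument is essentially a conjugation followed by an induction, and the hypothesis $\gcd(g,k)=1$ is used in the one place it is genuinely needed, namely the existence of $g^{-1}\pmod{k}$. The only point deserving care is keeping the exponents of $P$ well-defined modulo $k$, which is automatic from $P^{k}=I$. As a sanity check, one could alternatively argue via Lemma \ref{gh-circulant}: if one knew \emph{a posteriori} that $A^{-1}$ is $h$-circulant for some $h$, then $I = AA^{-1}$ would be $gh$-circulant, forcing $gh\equiv 1\pmod{k}$; but since existence of such an $h$ is precisely what must be proved, the permutation-matrix route above is the cleaner path.
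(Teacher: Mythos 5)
Your argument is correct: conjugating the identity $PA=AP^{g}$ of Lemma \ref{PA=AP^g} by $A^{-1}$ to get $A^{-1}P=P^{g}A^{-1}$, iterating to $A^{-1}P^{m}=P^{gm}A^{-1}$, and then substituting $m=g^{-1}\pmod{k}$ indeed yields $PA^{-1}=A^{-1}P^{g^{-1}}$, and the ``if'' direction of Lemma \ref{PA=AP^g} (which holds for arbitrary matrices, since $PB=BP^{h}$ is equivalent to $B[i,j]=B[i+1,j+h]$) finishes the proof. There is nothing in the paper to compare against: the authors state this lemma without proof, citing Davis \cite{D}; your derivation is the standard one and is complete, and you are right to note that the route through Lemma \ref{gh-circulant} would be circular as a proof of existence.
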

The transpose of a g-circulant matrix exhibits similar characteristics.
\begin{lemma}\label{transpose g circ}
Let $A$ be a $g$-circulant matrix of order $k\times k$ with $\gcd(g,k)=1.$ Then $A^{T}$ is $g^{-1}$-circulant, where $g^{-1}$ denotes the multiplicative inverse of
$g$ modulo $k$.
\end{lemma}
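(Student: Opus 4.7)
The plan is to prove this as an immediate algebraic consequence of Lemma \ref{PA=AP^g}, avoiding any entry-by-entry bookkeeping. The cleanest route is to transpose the identity that characterizes $g$-circulant matrices and then use the fact that the cyclic shift $P = \text{circulant}(0,1,0,\ldots,0)$ satisfies $P^T = P^{-1} = P^{k-1}$, together with the invertibility of $g$ modulo $k$.

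First I would start from the characterization $PA = AP^g$ given by Lemma \ref{PA=AP^g}, and take transposes on both sides to obtain
\begin{equation*}
A^T P^T = (P^g)^T A^T = (P^T)^g A^T.
\end{equation*}
Since $P$ is a permutation matrix with $P^k = I$, we have $P^T = P^{-1}$, so the displayed identity rearranges to
\begin{equation*}
P^{g} A^T = A^T P.
\end{equation*}

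Next I would bootstrap this relation to the desired form $P A^T = A^T P^{g^{-1}}$. Applying $P^g A^T = A^T P$ repeatedly (conjugating both sides by the appropriate powers of $P$), one obtains $P^{jg} A^T = A^T P^j$ for every non-negative integer $j$, which is a straightforward induction on $j$. Since $\gcd(g,k) = 1$, there exists $h$ with $gh \equiv 1 \pmod{k}$; choosing $j = h$ and using $P^k = I$ gives $P A^T = P^{hg} A^T = A^T P^h$. By Lemma \ref{PA=AP^g} again (applied to $A^T$ in place of $A$ with $h$ in place of $g$), this is exactly the statement that $A^T$ is $h$-circulant, i.e.\ $g^{-1}$-circulant.

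I do not foresee a genuine obstacle here: the only delicate point is justifying the passage from $P^g A^T = A^T P$ to $P A^T = A^T P^{g^{-1}}$, and this is handled by the iteration argument together with the hypothesis $\gcd(g,k)=1$, which is precisely what makes $g$ invertible modulo $k$. An alternative direct proof, which I would mention as a sanity check but not present in full, is to use the entry formula $A[i,j] = c_{j-ig \bmod k}$: writing $A^T[i,j] = A[j,i] = c_{i - jg \bmod k}$ and defining $d_i := c_{-ig \bmod k}$, one checks that $A^T[i,j] = d_{j - i g^{-1} \bmod k}$, which matches Definition \ref{g-circ def} for a $g^{-1}$-circulant matrix with first row $(d_0, d_1, \ldots, d_{k-1})$.
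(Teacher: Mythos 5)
Your proof is correct, but it takes a genuinely different route from the paper's. The paper works directly at the level of entries: it iterates the defining relation $A[i,j]=A[i+1,j+g]$ to get $A[i,j]=A[i+l,j+lg]$, chooses $l=g^{-1}$ (which exists since $\gcd(g,k)=1$) so that the row shift corresponding to a unit column shift is $g^{-1}$, and concludes that the entries of $A^T$ satisfy $A^T[i,j]=A^T[i+1,j+g^{-1}]$. You instead transpose the operator identity $PA=AP^g$ of Lemma \ref{PA=AP^g}, use $P^T=P^{-1}$ to obtain $P^gA^T=A^TP$, and then iterate to $P^{jg}A^T=A^TP^j$, specializing to $j=g^{-1}\bmod k$ to land on $PA^T=A^TP^{g^{-1}}$, which is exactly the characterization of a $g^{-1}$-circulant. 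All the steps check out (including the induction and the use of $P^k=I$), and the "if and only if" form of Lemma \ref{PA=AP^g} licenses applying it in both directions. Your argument is arguably cleaner: it avoids the index bookkeeping that makes the paper's proof somewhat opaque, and it makes transparent where $\gcd(g,k)=1$ is used (invertibility of $g$ modulo $k$, hence the existence of the exponent $h$). The paper's approach has the modest advantage of being self-contained at the level of the definition and of exhibiting explicitly how entries of $A^T$ propagate, which is the style used elsewhere in the paper; your closing entry-formula sanity check $A^T[i,j]=d_{j-ig^{-1}\bmod k}$ with $d_i=c_{-ig\bmod k}$ essentially reproduces that computation and even identifies the first row of $A^T$, which the paper does not do.
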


\begin{proof}
Given that $A=(a_{i,j}), 0 \leq i,j \leq k-1$ is $g$-circulant matrix, we have $A[i,j]=A[i+1,j+g]$, i.e., $a_{i,j}=a_{i+1,j+g}$ for all $0 \leq, i,j \leq k-1,$  considering subscripts modulo $k.$
 Using this property, the entries of $A$ exhibit the following pattern: $$a_{i,j}=A[i,j]=A[i+1,j+ g]=A[i+2,j+ 2g]=\cdots=A[i+l,j+ lg ]=a_{i+l,j+lg},$$  where subscripts are calculated modulo $k.$ Consequently, the entry at the $i$-th row and $j$-th column repeats at the $j+1$-th column 
 when $j+lg=j+1 \pmod k.$ Since $\gcd (g,k)=1$, we have $l=g^{-1}.$ Then entries of $A^T$ are $A[j,i]$ and they satisfy $A[j,i]=A[j+1,i+g^{-1}]$ for all $0 \leq, j,i \leq k-1.$ Thus $A^{T}$ is $g^{-1}$-circulant.
\end{proof}
Moreover, in the case of $\gcd(g,k)=1$, the representation of $g$-circulant matrices using permutation matrices is elucidated in \cite{D} as follows:
\begin{eqnarray}\label{g-circulant structure}
A= g\text{-circulant}(c_0 , c_1 , c_2 , \hdots , c_{k-1})=c_0Q_g+c_1Q_gP+c_2Q_gP^2+\cdots +c_{k-1}Q_gP^{k-1},
\end{eqnarray}
where $Q_g=g$-circulant$(1,0,0,\hdots,0)$, $P=$ circulant$(0,1,0,\hdots,0)$, and both are permutation matrices of order $k \times k.$

The notion of the cyclic matrix was introduced by Liu and Sim \cite{LS} as a generalization of the circulant matrix in $2016.$ 
A cyclic matrix of order $k \times k$ is defined using a $k$-cycle permutation $\rho$ of its first row, where $\rho  \in S_k$, the symmetric group of $k$ elements. 
The definition of cyclic matrix is the following.

\begin{definition}\label{cyclic def}
For a $k$-cycle $\rho \in S_k$, a matrix $\mathfrak{C}_\rho$ of order $k \times k$ is called cyclic if each subsequent row is $\rho$-permutation of the previous row. We represent this matrix as $\text{cyclic}_\rho(c_0,c_1,c_2,\hdots,c_{k-1})$, where $(c_0,c_1,c_2,\hdots,c_{k-1})$ is the first row of the matrix. The $(i,j)$-th entry of $\mathfrak{C}_\rho$ can be expressed as $\mathfrak{C}_\rho(i,j)=c_{\rho^{-i}(j)}.$
\end{definition}

For example, the matrix $cyclic_\rho(c_0,c_1,c_2,\hdots,c_{k-1})$, where $\rho=(0 ~1 ~ 2 \cdots~ k-1) \in S_k$ results in a circulant matrix. Similarly, if we use $\rho =(0 ~ k-1 ~1 ~2 \cdots k-2) \in S_k$, we obtain a left-circulant matrix.  Note that, a $k$-cycle of the form $\begin{pmatrix} \label{g-cycle}
 0 & 1 & 2 & \cdots &k-1\\
 g & g+1 & g+2 & \cdots & g+k-1
 \end{pmatrix}$, where $g+i$ is calculated modulo $k$ and $\gcd(k,g)=1$ can be written as $(0 \quad g \quad {2g\pmod k} \quad {3g\pmod k} \cdots {(k-1)g \pmod k}).$ This gives a complete $k$- cycle because of the next lemma. 
 \begin{lemma}\label{gcd result}
Let $S= \{\alpha g \pmod k, ~\alpha =0,1,\hdots ,k-1\}.$ $S$ will be a complete residue system modulo $k$ if and only if $\gcd(k,g)=1.$
\end{lemma}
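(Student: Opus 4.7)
The plan is to prove the two directions of the biconditional separately, both via elementary modular arithmetic. The set $S$ has at most $k$ elements by construction, so it is a complete residue system modulo $k$ if and only if its $k$ listed representatives $\alpha g \pmod k$, for $\alpha = 0, 1, \ldots, k-1$, are pairwise distinct, i.e., the map $\alpha \mapsto \alpha g \pmod k$ from $\{0,1,\ldots,k-1\}$ to itself is injective (equivalently, bijective by the pigeonhole principle).

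For the forward direction, I would assume $\gcd(k,g) = 1$ and show injectivity. Suppose $\alpha_1 g \equiv \alpha_2 g \pmod k$ for some $0 \leq \alpha_1, \alpha_2 \leq k-1$. Then $(\alpha_1 - \alpha_2) g \equiv 0 \pmod k$, so $k \mid (\alpha_1 - \alpha_2) g$. Since $\gcd(k,g) = 1$, this forces $k \mid (\alpha_1 - \alpha_2)$, and the range of $\alpha_i$ then forces $\alpha_1 = \alpha_2$. Hence $|S| = k$ and $S$ is a complete residue system.

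For the converse, I would argue the contrapositive: if $d := \gcd(k,g) > 1$, then $S$ cannot be a complete residue system. Write $k = dk'$ and $g = dg'$. Every element $\alpha g \pmod k$ is a multiple of $d$: indeed, if $\alpha g = qk + r$ with $0 \leq r < k$, then $r = \alpha g - qk$ is divisible by $d$ because both $\alpha g$ and $qk$ are. Thus $S$ is contained in the set of multiples of $d$ in $\{0, 1, \ldots, k-1\}$, which has only $k/d < k$ elements, so $S$ cannot contain all $k$ residues. (Alternatively, observe that $\alpha = k' = k/d$ lies in $\{1, \ldots, k-1\}$ and satisfies $k' g = k g' \equiv 0 \pmod k$, witnessing that $\alpha = k'$ and $\alpha = 0$ give the same residue.)

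There is no real obstacle here; the statement is a standard fact about residue systems, and the only thing to be careful about is presenting both implications cleanly and making clear that bijectivity of a self-map of a finite set follows from injectivity, so counting distinct residues suffices.
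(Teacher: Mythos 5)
Your proof is correct and complete: the forward direction via cancellation of $g$ modulo $k$ when $\gcd(k,g)=1$, and the converse via the observation that every residue in $S$ is a multiple of $d=\gcd(k,g)$ (or the explicit collision at $\alpha = k/d$), are exactly the standard argument. Note that the paper states this lemma without any proof, treating it as a known elementary fact, so there is no authorial proof to compare against; your write-up fills that gap correctly.
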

Cyclic matrices corresponding to these cycles are $g$-circulant matrices. In \cite{LS}, Liu and Sim proved a bijection between cyclic and circulant matrices, although no explicit structure of the corresponding permutation matrix was provided.
Later, in \cite{TA2}, Chatterjee and Laha established the permutation equivalence between cyclic and circulant matrices by explicitly determining the structure of the associated permutation matrix. The result is as follows:
\begin{theorem}\label{relation cyclic circulant}
Let $\mathfrak{C}_\rho(c_0,c_1,\hdots,c_{k-1})$ be a cyclic matrix. Then there exists a unique permutation matrix $Q$ defined by $$Q(i,j)=\begin{cases}
  1, & \text{if $i=\rho^j(0),j=0,1,\cdots, k-1$; }\\
  0, & \text{otherwise}.
\end{cases}$$ such that $\mathfrak{C}_\rho Q=$ circulant$(c_{0},c_{\rho(0)},c_{\rho^2(0)},c_{\rho^3(0)},\hdots,c_{\rho^{k-1}(0)}).$
Moreover, $Q^{-1}=\text{cyclic}_\rho (1,0,0,\hdots,0).$
\end{theorem}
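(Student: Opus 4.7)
The proof of this theorem is essentially a direct entrywise verification, so my plan is to unwind the two definitions (cyclic matrix in Definition~\ref{cyclic def} and the matrix $Q$ as given) and compute the relevant matrix products. Before starting I would note that $\rho$ being a $k$-cycle guarantees that the map $j \mapsto \rho^j(0)$ is a bijection on $\{0,1,\dots,k-1\}$, so $Q$ has exactly one $1$ in each row and each column and is therefore a permutation matrix; uniqueness is then immediate from the explicit definition.

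For the main claim that $\mathfrak{C}_\rho Q$ is the circulant matrix with first row $(c_0, c_{\rho(0)}, c_{\rho^2(0)}, \dots, c_{\rho^{k-1}(0)})$, I would compute
\[
(\mathfrak{C}_\rho Q)(i,j) \;=\; \sum_{l=0}^{k-1} \mathfrak{C}_\rho(i,l)\, Q(l,j) \;=\; \sum_{l=0}^{k-1} c_{\rho^{-i}(l)}\, Q(l,j).
\]
By the defining property of $Q$, the only nonzero summand occurs when $l = \rho^j(0)$, yielding $(\mathfrak{C}_\rho Q)(i,j) = c_{\rho^{-i}(\rho^j(0))} = c_{\rho^{j-i}(0)}$. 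Since $\rho^{j-i}$ depends only on $(j-i) \bmod k$ (because $\rho^k = \mathrm{id}$), this entry depends solely on $j-i \bmod k$, which is exactly the defining property of a circulant matrix. Reading off the $(0,t)$-entries gives the claimed first row.

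For the inverse, letting $R := \text{cyclic}_\rho(1,0,0,\dots,0)$, Definition~\ref{cyclic def} gives $R(l,j) = 1$ iff $\rho^{-l}(j) = 0$, i.e.\ iff $j = \rho^l(0)$. Computing
\[
(QR)(i,j) \;=\; \sum_{l=0}^{k-1} Q(i,l)\, R(l,j),
\]
the $l$-th summand is $1$ precisely when $i = \rho^l(0)$ and $j = \rho^l(0)$ both hold; such an $l$ exists iff $i=j$, and then it is unique. Hence $QR = I$, so $Q^{-1} = R = \text{cyclic}_\rho(1,0,\dots,0)$.

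I do not expect any serious obstacle: the argument is a careful index chase and the only subtle point is bookkeeping of the relation $\rho^{-i}\circ\rho^j = \rho^{j-i}$ modulo $k$. The main thing to keep straight is the distinction between the condition $i = \rho^j(0)$ appearing in $Q$ and the condition $j = \rho^i(0)$ appearing in $R$, which is what makes $R$ the inverse rather than the matrix $Q$ itself.
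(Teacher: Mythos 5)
Your verification is correct and complete: the entrywise computation $(\mathfrak{C}_\rho Q)(i,j)=c_{\rho^{j-i}(0)}$ establishes the circulant structure, and the identification of $\text{cyclic}_\rho(1,0,\dots,0)$ as the matrix with $(l,j)$-entry $1$ iff $j=\rho^l(0)$ (i.e.\ $Q^T$) correctly yields $QR=I$. Note that the paper itself states this theorem as imported from \cite{TA2} without reproducing a proof, so there is nothing to compare against; your direct index-chasing argument is the natural one and handles the one genuine subtlety (that $j\mapsto\rho^j(0)$ is a bijection because $\rho$ is a full $k$-cycle) explicitly.
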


Utilizing Theorem \ref{relation cyclic circulant}, they derived a representation for cyclic matrices using permutation matrices in \cite{TA2}:
\begin{eqnarray}\label{cyclic structure}
\mathfrak{C}= \text{cyclic}_{\rho} (c_0 , c_1 , c_2 , \hdots , c_{k-1})=c_0Q_{\rho}+c_1Q_{\rho}P+c_2P^2Q_{\rho}+\cdots +c_{k-1}P^{k-1}Q_{\rho},
\end{eqnarray}
where $Q_{\rho}=$ cyclic$(1,0,0,\hdots,0)$, $P=$ circulant$(0,1,0,\hdots,0)$, and both are permutation matrices of order $k \times k.$

\section{\bf g-Circulant matrices with MDS and involutory properties \label{sec:g-circulant MDS involutory}}

Liu and Sim \cite{LS} provided examples of involutory MDS left-circulant matrices for odd order over the finite field $\mathbb{F}_{2^m}.$ They showed that, there are no involutory MDS cyclic matrices of order $4, 8.$ In this section, we extend their investigation to a subclass of cyclic matrices and establish a structural characterization of involutory $g$-circulant matrices.
Our results provide a theoretical framework explaining why odd and even order $g$-circulant matrices exhibit different behaviours under the involutory and MDS conditions.

\begin{theorem}\label{A^2 g-circulant structure}
Let $A$ be a $g$-circulant matrix with the first row $(c_0,c_1,\hdots,c_{k-1})$ and $\gcd(k,g)=1.$ Then $A^2$ can be expressed as $$A^2=\sum_{l=0}^{k-1}\left( \sum\limits_{\substack{i,j=0\\gi+j=l\pmod k}} ^{k-1}c_ic_j \right) Q_g^2P^l,$$ where $Q_g^2=g^2$-circulant$(1,0,0,\hdots,0)$ and $P=$circulant$(0,1,0,\hdots,0).$
\end{theorem}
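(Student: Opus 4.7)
The natural approach is to start from the permutation-matrix representation of $A$ given in equation~(\ref{g-circulant structure}), namely
\[
A = \sum_{i=0}^{k-1} c_i Q_g P^i,
\]
and then simply square it. Expanding,
\[
A^2 = \sum_{i=0}^{k-1}\sum_{j=0}^{k-1} c_i c_j \, Q_g P^i Q_g P^j,
\]
so the whole task reduces to rewriting $Q_g P^i Q_g P^j$ as a scalar times $Q_g^2 P^{\text{something}}$.

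The key tool is Lemma~\ref{PA=AP^g}: since $Q_g$ is itself $g$-circulant, the lemma gives $P Q_g = Q_g P^g$. A one-line induction on $i$ then yields $P^i Q_g = Q_g P^{ig}$ (with the exponent reduced modulo $k$, using $P^k = I$). Substituting this into the product gives
\[
Q_g P^i Q_g P^j = Q_g (P^i Q_g) P^j = Q_g Q_g P^{ig} P^j = Q_g^2 P^{ig+j},
\]
where $ig+j$ is understood modulo $k$. Therefore
\[
A^2 = \sum_{i,j=0}^{k-1} c_i c_j\, Q_g^2 P^{ig+j \pmod k}.
\]

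Finally, one regroups the double sum by fixing $l = ig + j \pmod k$; this yields exactly the claimed expression
\[
A^2 = \sum_{l=0}^{k-1}\Biggl( \sum_{\substack{i,j=0\\ gi+j \equiv l \pmod k}}^{k-1} c_i c_j \Biggr) Q_g^2 P^l.
\]
The identification $Q_g^2 = g^2\text{-circulant}(1,0,\dots,0)$ is then immediate from Lemma~\ref{gh-circulant}, since the product of a $g$-circulant and a $g$-circulant is $g^2$-circulant, and the first row of $Q_g^2$ has a single $1$.

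I do not anticipate any real obstacle: the only step requiring care is the commutation identity $P^i Q_g = Q_g P^{ig \pmod k}$, which depends on Lemma~\ref{PA=AP^g} and on $P$ having order $k$. Everything else is bookkeeping of indices modulo $k$, made legitimate by the hypothesis $\gcd(k,g)=1$ (which also ensures $Q_g$ is a genuine permutation matrix via Lemma~\ref{gcd result}).
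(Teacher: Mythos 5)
Your proposal is correct and follows essentially the same route as the paper: both expand $A=\sum_i c_iQ_gP^i$, use the commutation relation $PQ_g=Q_gP^g$ from Lemma~\ref{PA=AP^g} to reduce each product to $Q_g^2P^{ig+j}$, and regroup by the residue $l\equiv gi+j\pmod k$. If anything, your explicit identity $Q_gP^iQ_gP^j=Q_g^2P^{ig+j}$ makes the regrouping cleaner than the paper's coefficient-by-coefficient induction, but the underlying argument is the same.
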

 
\begin{proof}
Let $A= g$-circulant$(c_0,c_1,\hdots,c_{k-1})$ with $\gcd(k,g)=1.$ Then by Equation \ref{g-circulant structure}, $A$ can be expressed as
$A=\sum_{i=0}^{k-1} c_iQ_gP^{i},$
where $Q_g$ is a $g$-circulant matrix.
Therefore $A^2$ can be written as
\begin{align*}
A^2&=(c_0Q_g+c_1Q_gP+c_2Q_gP^2+c_3Q_gP^3+\cdots+ c_{k-1}Q_gP^{k-1})^2\\
&=c_0^2Q_g^2+ c_1^2(Q_gP)^2+\cdots+c_{k-1}^2(Q_gP^{k-1})^2+c_0c_1Q_gQ_gP+c_0c_2Q_gQ_gP^2+\cdots\\& \quad +c_{k-2}c_{k-1}Q_gP^{k-2}Q_gP^{k-1}
\end{align*}
Using the identity $PQ_g=Q_gP^g$ and $P^k=I$, we can derive that $Q_gP^iQ_gP^{k-ig}=Q_g^2P^k=Q_g^2.$ Therefore, the coefficient of $Q_g^2$ in $A^2$ is: 
\begin{eqnarray*}
\sum_{\substack{i,j=0,\\gi+j=0 \pmod k}}^{k-1}c_ic_j=c_0^2+c_1c_{k-g}+c_2c_{k-2g}+\cdots+c_{k-1}c_{k-(k-1)g}.
\end{eqnarray*}
Similarly, the coefficient of $Q_g^2P$ in $A^2$ can be written as 
\begin{eqnarray*}
\sum_{\substack{i,j=0,\\gi+j=1 \pmod k}}^{k-1}c_ic_j=c_0c_1+c_1c_{1
+k-g}+c_2c_{1+k-2g}+\cdots+c_{k-1}c_{1+k-(k-1)g}.
\end{eqnarray*}
Thus using induction, we get the coefficient of $Q_g^2P^l$ is $\sum\limits_{\substack{i,j=0,\\ gi+j=l \pmod k}}^{k-1}c_ic_j.$ Therefore, we can conclude that $A^2=\sum\limits_{l=0}^{k-1}\left( \sum\limits_{\substack{i,j=0\\gi+j=l\pmod k}} ^{k-1}c_ic_j \right) Q_g^2P^l.$
\end{proof}
\noindent
This theorem provides an explicit description of the entries of the matrix $A^2$.
In particular, for the matrices $Q_g^2 P^l$, with $0 \le l \le k-1$, the first row
contains a single nonzero entry equal to $1$ in the $l$-th position. As a result,
the first row of $A^2$ admits a simple expression. Specifically, for
$0 \le r \le k-1$,
$
A^2[0,r]
=
\sum_{{i,j=0~ (gi+j) \equiv r \pmod{k}}}^{k-1}
c_i c_j .
$
Moreover, since $A^2$ is a $g^2$-circulant matrix, its entries satisfy
$
A^2[i,j] = A^2[i+1,\, j+g^2],
$
where all indices are taken modulo $k$. Consequently, once the first row of $A^2$
is determined, all remaining entries follow immediately from this relation.

Utilizing the structure of $A^2$ from Theorem \ref{A^2 g-circulant structure}, we discuss the existence of $g$-circulant matrices over the finite field $\mathbb{F}_{2^m}$ with both involutory and MDS properties. To begin with, we show the non-existence of $g$-circulant involutory matrices when $g^2 \not\equiv 1 \pmod {k}.$ The theorem is as follows.

\begin{theorem}\label{not inv g circ}
Let $A$ be a $g$-circulant matrix of order $k \times k$ and $\gcd(k,g)=1.$ If $g^2 \not\equiv 1 \pmod {k}$, then $A$ cannot be involutory.
\end{theorem}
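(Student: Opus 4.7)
The plan is to reduce the involutoriness of the $g$-circulant matrix $A$ to a congruence on $g^2$, via two clean structural facts: $A^2$ is automatically $g^2$-circulant, and the identity matrix admits an $h$-circulant representation only when $h \equiv 1 \pmod k$.

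First, I would observe that $A^2$ is $g^2$-circulant. This is immediate from Lemma \ref{gh-circulant} with $B = A$, and it is also visible directly in Theorem \ref{A^2 g-circulant structure}, since the building blocks $Q_g^2 P^l = Q_{g^2} P^l$ appearing there are precisely the matrices spanning the space of $g^2$-circulant matrices in the representation (\ref{g-circulant structure}).

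Next, suppose toward a contradiction that $A^2 = I$. Then the identity matrix itself must be $g^2$-circulant. Using the formula $A[i,j] = c_{j - ig \bmod k}$ recorded just after Definition \ref{g-circ def}, any $h$-circulant representation of $I$ with first row $(c_0, c_1, \ldots, c_{k-1})$ must have $c_0 = 1$ and $c_j = 0$ for $j \neq 0$ (read off the first row of $I$), which then gives $I[i,j] = 1$ if and only if $j \equiv ih \pmod k$. Comparing this with $I[i,j] = \delta_{ij}$ at $i = 1$ forces $h \equiv 1 \pmod k$. Applying this with $h = g^2$ yields $g^2 \equiv 1 \pmod k$, contradicting the hypothesis.

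The argument is short, so there is no serious obstacle; the only point worth verifying is the claim that $I$ cannot be $h$-circulant for $h \not\equiv 1 \pmod k$, which the brief computation above settles. This also motivates the next stage of the paper: the remaining candidates for an involutory $g$-circulant matrix are exactly the square roots of $1$ modulo $k$, among which $g \equiv -1 \pmod k$ will eventually be singled out as the only one compatible with the MDS property.
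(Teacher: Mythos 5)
Your proposal is correct and follows essentially the same route as the paper: both deduce from Lemma \ref{gh-circulant} that $A^2$ is $g^2$-circulant and then observe that the identity matrix cannot be $h$-circulant unless $h\equiv 1\pmod k$ (the paper phrases this as the single entry comparison $A^2[0,0]=A^2[1,g^2]$, forcing $1=0$). No substantive difference.
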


\begin{proof}
Let $A$ be a $g$-circulant matrix of order $k \times k$ and $\gcd(k,g)=1.$ Then $A^2$ is a $g^2$-circulant matrix. Therefore $A^2[0,0]=A^2[1,g^2].$ If $A$ is involutory then $A^2[0,0]=1.$ But $A^2[1,g^2]=0$ since $g^2 \not\equiv 1 \pmod {k}.$ This is a contradiction. 
\end{proof}

An example illustrating Theorem \ref{not inv g circ} is as follows.
\begin{example}\label{ex1}
Let $a$ be a root of the primitive polynomial
$
1+x^2+x^5+x^6+x^8
$
over $\mathbb{F}_2$. Then $a$ is a primitive element of the finite field $\mathbb{F}_{2^8}$.
Consider the $3$-circulant matrix of order $5\times 5$ with the first row $(1,a,1+a+a^4+a^5+a^7,1+a+a^3+a^4+a^5+a^7,a+a^3).$ Here $3^2\equiv 4 \pmod 5.$ The matrix $A^2$ is a $4$-circulant matrix. Then $A^2[0,0]=a^6+1=A^2[1,4].$ For $A$ to be involutory, we must have $A^2[0,0]=1 $ and $A^2[1,4]=0$, which is not possible. Consequently, it is evident that $A$ is never involutory.
\end{example}

The above theorem implies that, to construct a $g$-circulant matrix with MDS and involutory properties, we only need to focus on the case $g^2 \equiv 1 \pmod {k}.$
We begin with few elementary lemmas and a theorem to determine the number of solutions of the equivalence relation $x^2 \equiv 1 \pmod k.$ For the sake of completeness here we record some elementary proofs.

\begin{lemma}\label{2^m case}
Let $k=2^m$, $m$ positive integer. Then, the number of solutions to the congruence relation $x^2 \equiv 1 \pmod k$ in the residue modulo $k$ is $$\begin{cases}
  1, & \text{if $m=1$ };\\
  2, & \text{if $m=2$ };\\
  4, & \text{if $m \geq 3$}.
\end{cases}$$
\end{lemma}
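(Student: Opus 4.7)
The plan is to handle the three cases separately: for $m=1$ and $m=2$ I would just enumerate residues and check, since there are only $2$ and $4$ candidates, respectively. For $m=1$, only $x\equiv 1 \pmod 2$ satisfies $x^2\equiv 1\pmod 2$; for $m=2$, one checks that $1^2\equiv 3^2\equiv 1\pmod 4$ while $0^2,2^2\not\equiv 1\pmod 4$, giving two solutions.

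The main case is $m\geq 3$. My approach is to factor $x^2-1=(x-1)(x+1)$ and exploit that consecutive integers are coprime. First I would observe that any solution must be odd, so I can write $x-1=2a$ and $x+1=2(a+1)$ with $\gcd(a,a+1)=1$ and exactly one of $a,a+1$ even. Then $(x-1)(x+1)=4a(a+1)$, so the condition $2^m \mid (x-1)(x+1)$ reduces to $2^{m-2} \mid a(a+1)$, and since one of the two factors is odd, the entire power of $2$ must fall on the other. This forces either $2^{m-1}\mid x-1$ or $2^{m-1}\mid x+1$, i.e.,
\[
x\equiv 1\pmod{2^{m-1}}\quad\text{or}\quad x\equiv -1\pmod{2^{m-1}}.
\]

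Next I would lift each of these two congruences to modulo $2^m$: each yields the two residues $\{1,\ 1+2^{m-1}\}$ and $\{-1,\ -1+2^{m-1}\}$ respectively, a total of four residues $1,\ 2^{m-1}+1,\ 2^{m-1}-1,\ 2^m-1$, which are pairwise distinct modulo $2^m$ precisely because $m\geq 3$ (so $2^{m-1}-1>1$). A short direct computation confirms that each lift satisfies $x^2\equiv 1\pmod{2^m}$: writing $x=\pm 1+2^{m-1}k$ gives $x^2=1\pm 2^m k+2^{2m-2}k^2\equiv 1\pmod{2^m}$ since $2m-2\geq m$ for $m\geq 2$.

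I expect the only subtle point to be the step that rules out additional solutions. The argument must cleanly show that the necessary condition $x\equiv \pm 1\pmod{2^{m-1}}$ is also sufficient modulo $2^m$, and that the four lifted residues are genuinely distinct for $m\geq 3$ (they would collapse for $m=2$, which is consistent with getting only two solutions in that case). Apart from that bookkeeping, the proof is a direct arithmetic argument with no hidden difficulty.
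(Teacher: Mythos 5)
Your proof is correct, and for the main case $m\geq 3$ it takes a genuinely different route from the paper. Both arguments verify by the same direct computation that $\pm 1$ and $2^{m-1}\pm 1$ square to $1$ modulo $2^m$. Where they diverge is in ruling out further solutions: you factor $x^2-1=(x-1)(x+1)$, use that $x$ is odd to write the product as $4a(a+1)$ with $\gcd(a,a+1)=1$, conclude that all of $2^{m-2}$ must divide a single factor, and hence that $x\equiv\pm 1\pmod{2^{m-1}}$; the four solutions then fall out as the lifts of these two classes to modulo $2^m$. The paper instead argues by contradiction, taking a putative extra odd solution $\alpha<2^{m-1}$ and writing it as $\alpha=2^{m-i}\pm 1$ for some $2\leq i\leq m-1$ before estimating $\alpha^2$. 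Your approach buys a cleaner and fully watertight exclusion step: the coprimality argument covers every odd residue, whereas the paper's parametrization $2^{m-i}\pm 1$ does not exhaust all odd $\alpha<2^{m-1}$ (for instance $\alpha=11$ with $m=5$ is of neither form), so its exclusion step as written leaves cases unexamined. Your lifting step also makes transparent why the count collapses from four to two at $m=2$, which the paper handles only by separate enumeration.
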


\begin{proof}
For $k=2$ the only solution of the congruence relation is $1.$ When $k=4$ there are two solutions, namely $x=1,3.$
 
Consider the case $k=2^m, m \geq 3.$ It is apparent that $\pm 1 \pmod {2^m}$ are the trivial solutions. Let there exists a non-trivial solution $\alpha$ such that $\alpha^2\equiv 1 \pmod {2^m}$. This implies $(\alpha+1)(\alpha-1) \equiv 0 \pmod {2^m}$. Furthermore, $\alpha$ must be an odd number. Let $\alpha=2k+1$ for some integer $k \geq 0.$ This implies $4k(k+1)\equiv 0 \pmod {2^m}.$ Therefore $2^{m-2}|k(k+1).$ If $k$ is even then it must be divisible by $2^{m-2}$ and hence $\alpha=2^{m-1}l+1.$ If $k+1$ is even then $\alpha=2^{m-1}l'-1.$ Therefore, the other two solutions correspond to the choices $l=1$ and $l'=1$. Hence, the complete set of solutions modulo $k$ is $\pm 1$ and $(2^{m-1}\pm 1)$.

%
\end{proof}

Next, consider the case for an odd prime power in the following lemma.
\begin{lemma}\label{p^m case}
Let $k = p^m$, where $p \geq 3$ be a prime number. Then the solutions to the congruence $x^2 \equiv 1 \pmod{k}$ in the residue modulo $k$ are given by $x \equiv \pm 1 \pmod{k}.$
\end{lemma}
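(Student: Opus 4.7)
My plan is to use the factorization $x^2 - 1 = (x-1)(x+1)$ together with the fact that $p$ is odd, which forces all $p$-adic content of $x^2-1$ to sit in a single factor.

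First, I would observe that $x \equiv \pm 1 \pmod{p^m}$ are indeed solutions, since $(\pm 1)^2 = 1$. This direction is immediate.

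For the converse, suppose $x^2 \equiv 1 \pmod{p^m}$. Then $p^m \mid (x-1)(x+1)$. The key step is to analyze $\gcd(x-1, x+1)$. Any common divisor of $x-1$ and $x+1$ must divide their difference $(x+1) - (x-1) = 2$, so $\gcd(x-1, x+1) \in \{1, 2\}$. Because $p$ is an odd prime, $p \nmid 2$, and therefore $p$ cannot divide both $x-1$ and $x+1$ simultaneously.

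Consequently, writing $p^m \mid (x-1)(x+1)$ and using that $p$ is prime together with the fact that $p$ divides at most one of the two factors, I would conclude that the full prime power $p^m$ must divide exactly one of them: either $p^m \mid x-1$, giving $x \equiv 1 \pmod{p^m}$, or $p^m \mid x+1$, giving $x \equiv -1 \pmod{p^m}$. Hence these are the only two solutions.

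I do not anticipate a significant obstacle here; the whole argument is a short coprimality observation. The only subtlety worth stating carefully is the step that if a prime power $p^m$ divides a product of two integers whose $p$-parts are not simultaneously positive, then $p^m$ must divide one of the factors outright — this uses unique factorization in $\mathbb{Z}$ applied at the prime $p$.
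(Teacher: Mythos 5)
Your proof is correct and follows essentially the same route as the paper: factor $x^2-1=(x-1)(x+1)$ and use that an odd prime $p$ cannot divide both factors (since any common divisor divides $2$), so the full power $p^m$ must sit in one factor. The paper phrases this as a contradiction argument and is somewhat more terse; your gcd formulation is the same idea written out more carefully.
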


\begin{proof}
Let $x^2 \equiv 1 \pmod {p^m}.$ This implies $p^m |(x+1)(x-1).$ Suppose that $x \neq \pm 1 \pmod {p^m}.$ In this case, both $x+1$ and  $ x-1$ are less than $p^m.$ This implies $p$ divides both $x+1$ and $x-1.$ Therefore $p|2$, which is a contradiction to $p$ is an odd prime.
\end{proof}

Applying the Chinese Remainder Theorem one can prove the following theorem.
\begin{theorem}\label{general n case}
Let $k=2^mp_1^{m_1}\cdots p_l^{m_l}$ where $p_i$'s are odd primes and $m,m_i \geq 0$ for $1 \leq i \leq l.$ Then the number of solutions of the equation $x^2=1 \mod k$ in residue modulo $k$ is $$\begin{cases}
  2^l, & \text{if $m=0,1$ };\\
  2^{l+1}, & \text{if $m=2$ };\\
  2^{l+2}, & \text{if $m \geq 3$}.
\end{cases}$$
\end{theorem}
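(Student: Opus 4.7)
The plan is to reduce the problem to the prime-power factors via the Chinese Remainder Theorem and then plug in the two previous lemmas.

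First I would invoke the ring isomorphism
\[
\Z/k\Z \;\cong\; \Z/2^m\Z \times \Z/p_1^{m_1}\Z \times \cdots \times \Z/p_l^{m_l}\Z
\]
given by CRT (valid since the prime powers $2^m, p_1^{m_1},\ldots, p_l^{m_l}$ are pairwise coprime). Under this correspondence, an element $x \in \Z/k\Z$ satisfies $x^2 \equiv 1 \pmod k$ if and only if its image $(x_0, x_1, \ldots, x_l)$ satisfies $x_0^2 \equiv 1 \pmod{2^m}$ and $x_i^2 \equiv 1 \pmod{p_i^{m_i}}$ for every $1 \le i \le l$. Hence the total number of solutions is the product of the number of solutions in each component.

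Next I would apply Lemma \ref{p^m case} to each odd-prime factor, obtaining exactly $2$ solutions (namely $\pm 1$) modulo $p_i^{m_i}$ for every $i$, and so the odd-prime part contributes a factor of $2^l$. For the even part I would apply Lemma \ref{2^m case}: when $m=0$ there is no $2$-component and the factor is $1$; when $m=1$ there is a single solution; when $m=2$ there are $2$ solutions; and when $m \ge 3$ there are $4$ solutions. Multiplying these with $2^l$ gives exactly $2^l$, $2^l$, $2^{l+1}$ and $2^{l+2}$ respectively, matching the claim.

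There is no real obstacle here: the proof is essentially a bookkeeping exercise once CRT is in place, since the two preceding lemmas already do the prime-power work. The only point that requires mild care is handling the edge case $m = 0$ (so that $k$ is odd), where the $2$-adic factor is absent rather than contributing $1$; I would handle this by either convention (empty product equals $1$) or a brief separate remark, so that the formula $2^l$ for $m=0,1$ is uniform.
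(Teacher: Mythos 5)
Your proposal is correct and follows essentially the same route as the paper: a Chinese Remainder Theorem decomposition combined with Lemma \ref{2^m case} for the $2$-part and Lemma \ref{p^m case} for each odd prime power, yielding $2^i \cdot 2^l$ solutions. Your explicit handling of the $m=0$ edge case is slightly more careful than the paper's one-line argument, but the substance is identical.
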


\begin{proof}
Using Chinese Remainder Theorem, Lemma \ref{2^m case} and Lemma \ref{p^m case} the number of solutions of $x^2 \equiv 1 \pmod k$ is $2^i\cdot 2^l$ where  $$i=\begin{cases}
  0, & \text{if $m=0,1$ };\\
  1, & \text{if $m=2$ };\\
  2, & \text{if $m \geq 3$}.
\end{cases}$$
\end{proof}

In \cite{LS}, the authors conjectured that involutory MDS cyclic matrices of order $4,8$  over the finite field $\mathbb{F}_{2^m}$ do not exists. In the following theorem, we substantiate their conjecture within a specific subclass of cyclic matrices. 
Specifically, we show that $g$-circulant involutory matrices of order $2^d \times 2^d$ cannot be MDS by proving the existence of a singular submatrix. To establish this, we first demonstrate the presence of a left-circulant matrix as a submatrix in a $g$-circulant matrix of order $2^d \times 2^d$ for a particular $g.$
\begin{lemma}\label{left-circ submatrix}
Let $A$ be a $g$-circulant matrix of order $2^d \times 2^d$ and $g=2^{d-1}-1.$ Let $(c_0,c_1,c_2,\hdots, c_{2^d-1})$ be the first row of $A.$ Then $A$ has two left-circulant submatrices of order $2^{d-1} \times 2^{d-1}.$
\end{lemma}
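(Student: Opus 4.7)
The plan is to use the entry formula $A[i,j]=c_{j-ig\pmod{2^d}}$ from Definition \ref{g-circ def} and carve out two $2^{d-1}\times 2^{d-1}$ submatrices by restricting to even rows and then splitting the columns according to their parity. The single observation that makes $g=2^{d-1}-1$ special is that
\[
2g=2^d-2\equiv -2\pmod{2^d},
\]
so stepping from row $2i'$ to row $2(i'+1)$ advances the column-subscript by exactly $+2$ (rather than some generic amount that would scramble the pattern).

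Concretely, I would consider the submatrix $B$ obtained by selecting rows $0,2,\ldots,2^d-2$ and columns $0,2,\ldots,2^d-2$. Its $(i',j')$ entry is
\[
B[i',j']=A[2i',2j']=c_{2j'-2i'g\pmod{2^d}}=c_{2(i'+j')\pmod{2^d}},
\]
using $-2g\equiv 2\pmod{2^d}$. Now I invoke the elementary halving identity $2m\pmod{2^d}=2\bigl(m\pmod{2^{d-1}}\bigr)$ and define $b_l:=c_{2l}$ for $0\le l\le 2^{d-1}-1$; this rewrites the entry as $B[i',j']=b_{(i'+j')\pmod{2^{d-1}}}$, which is precisely the entry pattern of a left-circulant matrix of order $2^{d-1}$ with first row $(c_0,c_2,\ldots,c_{2^d-2})$. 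Repeating the same computation on the submatrix with even rows and odd columns $j=2j'+1$, and setting $b_l:=c_{2l+1}$, produces a second $2^{d-1}\times 2^{d-1}$ left-circulant matrix with first row $(c_1,c_3,\ldots,c_{2^d-1})$.

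There is no real conceptual difficulty: the argument is essentially index bookkeeping, and the only subtle point is the halving step $2m\pmod{2^d}=2(m\pmod{2^{d-1}})$ that lets one pass from a cyclic shift modulo $2^d$ on the ambient matrix to a cyclic shift modulo $2^{d-1}$ inside the submatrix. Once that identity is recorded, the two submatrices displayed above supply the required pair of left-circulant blocks, and the lemma follows.
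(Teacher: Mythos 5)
Your proposal is correct and follows essentially the same route as the paper: restrict to the even-indexed rows, split the columns by parity, and use $-2g\equiv 2\pmod{2^d}$ to see that each block has the left-circulant pattern. Your version merely makes the index reduction modulo $2^{d-1}$ explicit via the halving identity, where the paper carries out the same bookkeeping row by row.
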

\begin{proof}
Let $(c_0,c_1,c_2,\hdots, c_{2^d-1})$ be the first row of $A.$ We denote the $i$-th row of $A$ by $R_i$, $j$-th column by $C_j$ and $A[0,j]=c_j$ for $0 \leq j \leq 2^d-1.$ Consider the entries of the row $R_2$: 
\begin{eqnarray*}
A[2,0]=c_{2^d-2g \pmod {2^d}}=c_{2^d-2(2^{d-1}-1)}=c_2,  A[2,1]=c_3, \\A[2,2]=c_4, \hdots, A[2,2^d-1]=c_{2+2^d-1 \pmod{2^d}}=c_1.
\end{eqnarray*}
Similarly, the entries of the row $R_4$ are:
\begin{eqnarray*}
 A[4,0]=c_{2^d-4g \pmod {2^d}}=c_4, A[4,1]=c_5,\\ A[4,2]=c_6, \hdots, A[4,2^d-1]=c_{4+2^d-1 \pmod{2^d}}=c_3.
\end{eqnarray*}
Continuing this process, we find the entries of row $R_{2^d-2}$ are :
\begin{eqnarray*}
A[2^d-2,0]=c_{2^d-2},A[2^d-2,1]=c_{2^d-1}, \hdots, A[2^d-2,2^d-1]=c_{2^d-3}. 
\end{eqnarray*}
Therefore the rows $R_0,R_2,R_4,\hdots,R_{2^d-1}$ and the columns $C_0,C_2,C_4,\hdots,C_{2^d-2}$ form the left-circulant matrix with the entries of first row $c_0,c_2,c_4,\hdots,c_{2^d-2}.$ Similarly, the rows $R_0,R_2,R_4,\hdots,R_{2^d-1}$ and the columns $C_1,C_3,C_5,\hdots,C_{2^d-1}$ form the left-circulant matrix with the entries of the first row $c_1,c_3,c_5,\hdots,c_{2^d-1}.$
\end{proof}
Note that, if $\mathcal{C}_1$ is a circulant matrix with first row $(c_0,c_1,c_2,\hdots, c_{k-1})$ and $\mathcal{C}_2$ is a left-circulant matrix with the same first row, then their determinant is same over the finite field of characteristic $2.$
We are now ready to prove the theorem.

\begin{theorem}\label{2^d g cir non mds}
Let $A$ be a $g$-circulant matrix of order $2^d \times 2^d$ over a finite field of characteristic $2$ and $\gcd(g,2^d)=1.$ Let $(c_0,c_1,c_2,\hdots, c_{2^d-1})$ be the first row of $A$ and $g^2 \equiv 1 \pmod {2^d}.$ If $A$ is an involutory matrix, then $A$ cannot be an MDS matrix.
\end{theorem}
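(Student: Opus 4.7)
The plan is to enumerate the possible values of $g$ via Theorem~\ref{general n case}: for $d = 2$ the solutions of $g^2 \equiv 1 \pmod{2^d}$ are just $g \equiv \pm 1 \pmod 4$, and for $d \geq 3$ they are $g \in \{1,\,-1,\,2^{d-1}-1,\,2^{d-1}+1\} \pmod{2^d}$. The easy cases are $g \equiv \pm 1 \pmod{2^d}$. For $g \equiv 1$, the matrix $A$ is circulant and the Gupta--Ray result cited in the introduction (circulant involutory matrices of order $\geq 3$ are not MDS over $\mathbb{F}_{2^m}$) applies directly. For $g \equiv -1$, the matrix $A$ is left-circulant and symmetric, since $A[i,j] = c_{i+j} = A[j,i]$; combined with $A^2 = I$ this forces $A A^T = I$, so $A$ is orthogonal. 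Using Equation~(\ref{g-circulant structure}) we write $A = Q_{-1} C$ with $C$ the ordinary circulant with the same first row; since $Q_{-1}^T = Q_{-1}$ and $Q_{-1}^2 = I$, we obtain $C C^T = I$, so $C$ is a circulant orthogonal matrix of order $2^d$. By another Gupta--Ray theorem (circulant orthogonal matrices of order $2^d$ cannot be MDS over $\mathbb{F}_{2^m}$), $C$ is not MDS, and since $A$ and $C$ differ only by a row permutation, neither is $A$.

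The remaining cases $g \equiv 2^{d-1} \pm 1 \pmod{2^d}$ (arising only for $d \geq 3$) constitute the main obstacle, and for them the common target is to extract a (left-)circulant submatrix of $A$ of order $2^{d-1}$ whose first row is $(c_1, c_3, \ldots, c_{2^d-1})$ and to show that $O := c_1 + c_3 + \cdots + c_{2^d-1} = 0$; since in characteristic~$2$ the determinant of a (left-)circulant matrix of order $2^{d-1}$ equals the $2^{d-1}$-th power of its first-row sum, this singles out a singular submatrix of $A$. The submatrix is provided by Lemma~\ref{left-circ submatrix} when $g = 2^{d-1} - 1$; for $g = 2^{d-1} + 1$ a parallel direct index calculation (using $ig \equiv i \pmod{2^d}$ when $i$ is even) shows that the submatrix formed by even-indexed rows and odd-indexed columns is a circulant with the same first row $(c_1, c_3, \ldots, c_{2^d-1})$.

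To derive $O = 0$, the plan is to apply Theorem~\ref{A^2 g-circulant structure}: since $g^2 \equiv 1 \pmod{2^d}$ we have $Q_g^2 = I$, so $A^2 = \sum_l \alpha_l P^l$ with $\alpha_l = \sum_{gi + j \equiv l \pmod{2^d}} c_i c_j$, and the involutory condition forces $\alpha_l = \delta_{l,0}$. For $g = 2^{d-1} - 1$, inspection of $\alpha_{2^{d-1}}$ suffices: the diagonal contributions satisfy $(g+1)i = 2^{d-1} i \equiv 2^{d-1} \pmod{2^d}$ iff $i$ is odd, while the off-diagonal solutions pair up as $(i,\, i + 2^{d-1})$ with $i$ even and cancel in characteristic~$2$, giving $\alpha_{2^{d-1}} = \sum_{i \text{ odd}} c_i^2 = O^2 = 0$. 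The case $g = 2^{d-1} + 1$ is the genuinely harder step, since no single $\alpha_l$ produces $O^2$: the plan is to sum the relations $\alpha_l = 0$ over $l \equiv 2 \pmod 4$, and exploit that $g \equiv 1 \pmod 4$ (valid for $d \geq 3$) to reduce $gi + j \equiv 2 \pmod 4$ to $i + j \equiv 2 \pmod 4$; after the symmetric off-diagonal cancellations, what survives is $\bigl(\sum_{i \equiv 1 \pmod 4} c_i\bigr)^2 + \bigl(\sum_{i \equiv 3 \pmod 4} c_i\bigr)^2 = O^2 = 0$, yielding the desired $O = 0$.
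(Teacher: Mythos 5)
Your proposal is correct, and for three of the four subcases it runs parallel to the paper's own proof: for $g\equiv 1$ both reduce to Gupta--Ray's result on circulant involutory matrices; for $g\equiv -1$ the paper cites the left-circulant orthogonal non-MDS theorem of \cite{TA2}, while you factor $A=Q_{-1}C$ and fall back on Gupta--Ray's circulant orthogonal result --- equivalent routes; and for $g=2^{d-1}-1$ both read off $c_1+c_3+\cdots+c_{2^d-1}=0$ from the single coefficient $A^2[0,2^{d-1}]$ and kill the odd-indexed left-circulant submatrix of Lemma \ref{left-circ submatrix}. You genuinely diverge at $g=2^{d-1}+1$. The paper stays with the one coefficient $A^2[0,2^{d-1}]$: the solution set of $(g+1)i\equiv 2^{d-1}\pmod{2^d}$ is exactly $\{2^{d-2},3\cdot 2^{d-2}\}$, so involutivity gives $c_{2^{d-2}}+c_{3\cdot 2^{d-2}}=0$, and a singular $2\times 2$ minor is exhibited on rows $0,2^{d-1}$ and columns $2^{d-2},3\cdot 2^{d-2}$. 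You instead sum the vanishing coefficients $A^2[0,l]$ over $l\equiv 2\pmod 4$, use $g\equiv 1\pmod 4$ to collapse the constraint to $i+j\equiv 2\pmod 4$, and recover the full odd-index sum $O=0$; your observation that the even-row/odd-column submatrix is circulant with first row $(c_1,c_3,\ldots,c_{2^d-1})$ (since $2sg\equiv 2s\pmod{2^d}$) is correct and closes the argument. The paper's treatment of this subcase is more economical (one coefficient, a $2\times 2$ minor); yours is more uniform, handling both hard subcases by the same mechanism of a vanishing odd-index row sum and a $2^{d-1}\times 2^{d-1}$ singular (left-)circulant minor. One caveat you share with the paper: the $g=1$ case implicitly needs $d\geq 2$, since Gupta--Ray's lemma applies only to orders at least $3$ and $2\times 2$ involutory MDS circulants do exist over $\mathbb{F}_4$.
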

\begin{proof}
Consider the $g$-circulant matrix $A$ of order $2^d \times 2^d$  over $\mathbb{F}_{2^m}$ with first row $(c_0,c_1,c_2,\hdots, c_{2^d-1}).$ Given that $g^2 \equiv 1 \pmod {2^d}$, we consider the following cases for the possible values of $g$:

\textbf{Case I.} For the case $g=1,~A$ is a circulant matrix. If $A$ is involutory, then from  Lemma $9$ of \cite{GR}, $A$ can not be MDS. 


Let $g=2^d-1$. Then $A$ becomes a left-circulant matrix. Suppose that $A$ is involutory. Since every left-circulant matrix is symmetric, it follows that $A$ is orthogonal. Hence $A$ is a $2^d \times 2^d$ $g$-circulant orthogonal matrix over the finite field $\mathbb{F}_{2^m}$ with $\gcd(g,2^d)=1$.

By Theorem $5.4$ of \cite{TA2}, a $2^d \times 2^d$ $g$-circulant orthogonal matrix over $\mathbb{F}_{2^m}$ with $\gcd(g,2^d)=1$ cannot be an MDS matrix. Therefore, $A$ is not an MDS matrix.

\textbf{Case III.}  From the Lemma \ref{2^m case}, there exist values of $g$ in the range $1<g<2^d-1$ satisfying $g^2 \equiv 1 \pmod {2^d}.$ Note that $g$ must be an odd number since $\gcd(2^d,g)=1.$
 
Moreover, $A^2$ is a circulant matrix according to Lemma \ref{gh-circulant}. Let $A$ be involutory. Since $Q_g^2$ is a $g^2$-circulant matrix and $g^2 \equiv 1 \pmod {2^d}$, we have $Q_g^2=I.$ This implies $Q_g^2P^l=P^l$ for $0 \leq l \leq k-1.$ By utilizing Theorem \ref{A^2 g-circulant structure} and the involutory property, we can deduce that $A^2[0,0]=1$  and $A^2[0,l]=0$ for $1 \leq l \leq 2^d-1.$
We calculate the coefficient of  $A^2[0,2^{d-1}].$
\begin{align*}
A^2[0,2^{d-1}] &=\sum_{\substack{i,j=0,\\gi+j=2^{d-1} \pmod {2^d}}}^{k-1}c_ic_j \\&= \sum_{\substack{i=0,\\gi+i=2^{d-1} \pmod {2^d}}}^{k-1}c_i^2+\sum_{\substack{i\neq j,~i,j=0,\\gi+j=2^{d-1} \pmod {2^d}}}^{k-1}c_ic_j 
\end{align*} 
Consider the equation $gi+j=2^{d-1} \pmod {2^d}.$ Since $g$ is invertible, multiply this equation by $g^{-1}$  and using that $g$ has self-inverse, we get $i+gj=g^{-1}2^{d-1}=g2^{d-1}=(2k_1+1)2^{d-1}=2^dk_1+2^{d-1}=2^{d-1} \pmod {2^d}.$ Therefore the set $\{(i,j):gi+j=2^{d-1} \pmod {2^d}\}$ is same as the set $\{(i,j):i+gj=2^{d-1} \pmod {2^d}\}.$ 
Thus, the equation reduces as follows:
\begin{align*}
A^2[0,2^{d-1}] &= \sum_{\substack{i=0,\\gi+i=2^{d-1} \pmod {2^d}}}^{k-1}c_i^2+ \sum_{\substack{i<j,~i,j=0,\\gi+j=2^{d-1} \pmod {2^d}}}^{k-1}2c_ic_j\\&=\sum_{\substack{i=0,\\(g+1)i=2^{d-1} \pmod {2^d}}}^{k-1}c_i^2
\end{align*}

Consider the set $S=\{i:(g+1)i=2^{d-1} \pmod {2^d}\}.$ Note that, if $\alpha \in S$, then the additive inverse of $\alpha$ also belongs to $S$ because  $(g+1)(2^d-\alpha)=-(g+1)\alpha=2^d-2^{d-1}=2^{d-1}  \pmod {2^d}.$ As a result, $|S|$ is even.

From Lemma \ref{2^m case}, the only possibilities for $g$ are $2^{d-1}\pm 1.$ First we prove that for $g=2^{d-1}-1$, if $\alpha \in S$ then $\alpha+2 \in S.$ This is evident because $(g+1)(\alpha+2)=2^{d-1}+2(g+1)=2^{d-1}+2^{d}=2^{d-1} \pmod{2^d}.$ Since $1 \in S$ in this scenario, we get $1, 1+2=3,5,\hdots, 2^d-1 \in S.$ Also, $2 \notin S.$ Hence,
\begin{eqnarray*}
A^2[0,2^{d-1}]=(c_1+c_3+\cdots+c_{2^d-1})^2
\end{eqnarray*}
Since $A$ is involutory this implies $c_1+c_3+\cdots+c_{2^d-1}=0.$ 

Therefore by Lemma \ref{left-circ submatrix},  we get a left-circulant submatrix of order $2^{d-1} \times 2^{d-1}$ with determinant $0$ and this implies $A$ is not an MDS matrix.

Next consider the case for $g=2^{d-1}+1.$ First, we prove that if $\alpha \in S$ then $\alpha+2^{d-1}\in S.$ This hold because $(g+1)(\alpha+2^{d-1})=2^{d-1}+2^{d-1}(g+1)=2^{d-1}+2^{d-1}(2^{d-1}+2)=2^{d-1}+2^{d}(2^{d-2}+1)=2^{d-1} \pmod{2^d}.$ Furthermore, $2^{d-2} \in S$ because $(g+1)2^{d-2}=(2^{d-1}+2)2^{d-2}=2^{d-1}(2^{d-1}+1)=2^{d-1} \pmod{2^d}.$ Consequently $2^{d-2},2^{d-2}+2^{d-1} \in S .$ Thus 
\begin{eqnarray*}
A^2[0,2^{d-1}]=(c_{2^{d-2}}+c_{3\cdot 2^{d-2}})^2
\end{eqnarray*}
 Since $A$ is involutory, this implies $(c_{2^{d-2}}+c_{3 \cdot 2^{d-2}})=0.$ Consider the $2 \times 2 $ submatrix of $A$ with entries $A[0,2^{d-2}],A[0,3\cdot 2^{d-2}],A[2^{d-1},2^{d-2}]$ and $A[2^{d-1},3\cdot 2^{d-2}].$ The entries in the first rows are $A[0,2^{d-2}]=c_{2^{d-2}}$ and $A[0,3\cdot 2^{d-2}]=c_{3\cdot 2^{d-2}}.$ By calculating the entries of $2^{d-1}$-th row, we get  
 \begin{align*}
A[2^{d-1},2^{d-2}]&=c_{2^d-2^{d-1}g+2^{d-2} \pmod {2^d}}\\&=c_{2^d-2^{d-1}(2^{d-1}+1)+2^{d-2} \pmod{2^d}}=c_{3\cdot 2^{d-2}},\\
A[2^{d-1},3\cdot2^{d-2}]&=c_{2^d-2^{d-1}g+3 \cdot 2^{d-2} \pmod {2^d}}\\&=c_{2^{d-2}(2^2-2^d-2+3)\pmod{2^d}}=c_{2^{d-2}}
\end{align*}
 
Hence there exists a $2 \times 2$ submatrix of $A$ with determinant $0.$ Thus $A$ is not MDS. 
\end{proof}

Next we consider $g$-circulant matrices of orders other than $2^d \times 2^d.$ Let $k=2^m\prod_{i=1}^l p_i^{m_i}, m \geq 0, m_i\geq 1$ and $p_i$'s are odd primes. Note that this representation of $k$ covers all natural numbers other than powers of $2.$

\begin{theorem}\label{other odd order}
Let $A$ be a $g$-circulant matrix of order $k \times k$ with $\gcd(g,k)=1$ over a finite field of characteristic $2$ with $k=2^m\prod_{i=1}^l  p_i^{m_i}, m \geq 0, m_i \geq 1$ and $p_i$'s are odd primes. Let $(c_0,c_1,c_2,\hdots, c_{k-1})$ be the first row of $A$ and $~g^2 \equiv 1 \pmod k.$ If $A$ is an involutory matrix and $1 \leq g < k-1$, then $A$ is not an MDS matrix.
\end{theorem}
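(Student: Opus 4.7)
My approach is to mirror Theorem \ref{2^d g cir non mds} but replace its two explicit sub-cases with a uniform construction of a singular submatrix valid for every admissible $g$. The starting point is Theorem \ref{A^2 g-circulant structure}: since $g^2 \equiv 1 \pmod k$ forces $Q_g^2 = I$, the matrix $A^2$ is circulant with coefficient of $P^l$ equal to $\sum_{gi+j \equiv l \pmod k} c_i c_j$, and the involutory hypothesis sets this to $\delta_{l,0}$. I would first dispose of $g = 1$, where $A$ is circulant and the classical Gupta--Ray non-existence result (Lemma $9$ of \cite{GR}) applies (note $k \geq 3$ since $k$ has an odd prime factor). Henceforth I would assume $1 < g < k-1$ and set $d_\pm := \gcd(g \pm 1, k)$; since $k \mid (g-1)(g+1)$ and $g \not\equiv \pm 1 \pmod k$, both $d_\pm$ lie strictly between $1$ and $k$.

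The central choice would be $l := \operatorname{lcm}(d_+,\, k/d_-)$. Because $d_+ \mid l$, the congruence $(g+1)i \equiv l \pmod k$ has exactly $d_+$ solutions, forming a coset $C \subset \Z/k$. Because $k \mid (g-1)l$, multiplying $gi + j \equiv l$ by $g$ yields $i + gj \equiv gl \equiv l$, so the solution set of $gi+j\equiv l$ is invariant under the swap $(i,j) \leftrightarrow (j,i)$; in characteristic $2$ the off-diagonal pairs cancel and the vanishing of the $P^l$-coefficient collapses to $\big(\sum_{i\in C} c_i\big)^2 = 0$, i.e., $\sum_{i\in C} c_i = 0$. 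Next I would take $R := (k/d_+)\Z/k$, a subgroup of $\Z/k$ of order $d_+$. For $r \in R$ one has $r(g+1) \equiv 0 \pmod k$; combined with $g^2 \equiv 1$, this forces $j - rg \in C$ whenever $j \in C$. Thus the $d_+ \times d_+$ submatrix of $A$ with row-set $R$ and column-set $C$ has each row equal to a permutation of $(c_i)_{i \in C}$, so each row sums to $0$. The all-ones column vector therefore lies in its kernel, the submatrix is singular, and $A$ cannot be MDS.

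The main obstacle is the non-routine verification that $0 < l < k$. Since $d_+$ and $k/d_-$ both divide $k$, $l \mid k$, so it suffices to rule out $l = k$. A prime-by-prime argument using $\gcd(g-1, g+1) \mid 2$ is needed: at each odd $p \mid k$, exactly one of $g \pm 1$ is divisible by $p$, so $v_p(d_-) + v_p(d_+) = v_p(k)$, and the condition $v_p(l) = v_p(k)$ forces $p^{v_p(k)} \mid (g+1)$; at $p = 2$ (when $m \geq 1$), the automatic bound $v_2(d_-) \geq 1$ forces $v_2(d_+) = m$, i.e., $2^m \mid (g+1)$. Lifting $g \equiv -1 \pmod{p_i}$ to $g \equiv -1 \pmod{p_i^{m_i}}$ via $g^2 \equiv 1 \pmod{p_i^{m_i}}$ and applying CRT across all prime-power factors of $k$ would then yield $g \equiv -1 \pmod k$, contradicting $g < k - 1$. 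This CRT assembly, which crucially uses both the odd-prime factor of $k$ and the upper bound on $g$, is the one non-trivial calculation in the proof.
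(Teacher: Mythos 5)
Your proof is correct, and it follows the same skeleton as the paper's: dispose of $g=1$ via Gupta--Ray, then for $1<g<k-1$ use Theorem \ref{A^2 g-circulant structure} to read off one off-diagonal coefficient of the circulant matrix $A^2$, exploit the swap-invariance of $\{(i,j): gi+j\equiv l\}$ coming from $g^2\equiv 1\pmod k$ so that in characteristic $2$ the cross terms cancel and the coefficient collapses to $\bigl(\sum_{i\in C}c_i\bigr)^2$, and finally realize the resulting vanishing subset-sum as the common row sum of a square submatrix. The only genuine divergence is your choice of the index $l$. The paper simply takes $l=g+1$: then $gl\equiv g^2+g\equiv 1+g=l$ holds automatically, $l\not\equiv 0\pmod k$ is immediate from $1<g<k-1$, and $i=1$ is a visible solution of $(g+1)i\equiv l\pmod k$, so the index set is the coset $1+(k/d_+)\Z/k\Z$ with $d_+=\gcd(g+1,k)>1$ and no further number theory is required. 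Your $l=\operatorname{lcm}(d_+,k/d_-)$ also works, but it is precisely what manufactures the ``main obstacle'' you describe; the prime-by-prime valuation argument that $l\neq k$ is correct (and correctly uses $g<k-1$), yet it is entirely avoidable. On the other side of the ledger, your treatment of the singular submatrix is more careful than the paper's: you specify the row set $R=(k/d_+)\Z/k\Z$ explicitly and verify that $j\mapsto j-rg$ permutes $C$, whereas the paper's stated row indices appear to contain a slip (its rows $R_{1+\beta\alpha}$ should be $R_{\beta\alpha}$ for the entries to land in the set $S$ whose $c$-sum vanishes). So: same method, a needlessly harder parameter choice in one step, and a cleaner finish in another.
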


\begin{proof}
\textbf{Case I.}
Let $g=1$ i.e., $A$ is a circulant matrix. Then from Lemma $9$ of \cite{GR}, $A$ is not MDS.

\textbf{Case II.}
 Let's consider the case $1 < g< k-1.$  
 According to Theorem \ref{general n case}, there exists $g$ in this range with $g^2 \equiv 1 \pmod k.$ 
 Therefore $A^2$ is a circulant matrix by Lemma \ref{gh-circulant}. We now calculate the entry $A^2[0,g+1]$ using Theorem \ref{A^2 g-circulant structure} :
 \begin{align*}
A^2[0,g+1] &=\sum_{\substack{i,j=0,\\gi+j=g+1 \pmod {k}}}^{k-1}c_ic_j \\&= \sum_{\substack{i=0,\\gi+i=g+1 \pmod {k}}}^{k-1}c_i^2+\sum_{\substack{i\neq j,~i,j=0,\\gi+j=g+1 \pmod {k}}}^{k-1}c_ic_j
\end{align*} 
Let $(i,j)$ satisfy the equation $gi+j=g+1 \pmod k.$ Then $(j,i)$ also satisfies the same because $g$ has self-inverse. This implies $i+gj=1+g^{-1}=1+g \pmod k.$ Therefore we can write $A^2[0,g+1]$ as the following:
\begin{align*}
A^2[0,g+1] &= \sum_{\substack{i=0,\\gi+i=g+1 \pmod {k}}}^{k-1}c_i^2+ \sum_{\substack{i< j,~i,j=0,\\gi+j=g+1 \pmod {k}}}^{k-1}2c_ic_j\\&=\sum_{\substack{i=0,\\(g+1)i=g+1 \pmod {k}}}^{k-1}c_i^2
\end{align*} 
Consider the set $S=\{i:gi+i=g+1 \pmod k\}.$ This set is non-empty because $1 \in S.$ Note that, there always exists a smallest non-zero integer $\alpha<k$  such that $(1+g)\alpha=0 \pmod k.$ This holds because, the conditions $k |(g+1)(g-1)$ and $1 \leq (g+1),(g-1)< k$ implies $\gcd(k,g+1) >1.$ Therefore, such $\alpha$ exists.

Then $1+\beta \alpha \in S$ for $ \beta=\{1,2,\hdots,\lfloor \frac{k-1}{\alpha} \rfloor\},$ because $(1+g)(1+\beta \alpha)=(1+g)+\beta \alpha(1+g)= 1+g \pmod k.$ Therefore $A^2[0,g+1]$ can be written as:
\begin{align*}
A^2[0,g+1] &=\sum_{\substack{i=0,\\(g+1)i=g+1 \pmod {k}}}^{k-1}c_i^2 \\&=c_1^2+c_{1+\alpha}^2+c_{1+2\alpha}^2+\cdots+c_{1+\lfloor \frac{k-1}{\alpha} \rfloor \alpha}^2\\ &=(c_1+c_{1+\alpha}+c_{1+2\alpha}+\cdots+c_{1+\lfloor \frac{k-1}{\alpha} \rfloor \alpha})^2
\end{align*} 
Since $A$ is involutory, we get $c_1+c_{1+\alpha}+c_{1+2\alpha}+\cdots+c_{1+\lfloor \frac{k-1}{\alpha} \rfloor \alpha}=0.$
Therefore the determinant of the submatrix of $A$ with entries in rows $R_0,R_{1+\alpha},R_{1+2\alpha},\hdots, R_{1+\lfloor \frac{k-1}{\alpha} \rfloor \alpha}$ and columns $C_1,C_{1+\alpha},C_{1+2\alpha},\hdots, C_{1+\lfloor \frac{k-1}{\alpha} \rfloor \alpha}$ is $0.$ Therefore $A$ is not MDS.
\end{proof}

In the next case, we prove that it is possible to construct left-circulant involutory MDS matrices of order other than $2^d$ under certain conditions. This result resembles Proposition $6$ in \cite{LS}, but our proof utilizes Theorem \ref{A^2 g-circulant structure}, which provides a clear method for expressing the entries of $A^2.$

\begin{theorem}\label{left-circulant MDS inv}
Let $A$ be a left-circulant matrix of order $k \times k$ over a finite field of characteristic $2$ with $k=2^m\prod_{i=1}^l  p_i^{m_i}, m \geq 0, m_i\geq 1$ and $p_i$'s are primes. Let $(c_0,c_1,c_2,\hdots, c_{k-1})$ be the first row of $A.$ Then $A$ is involutory if and only if the following conditions hold:
\begin{enumerate}
\item $\sum\limits_{i=0}^{k-1} c_i=1$,
\item $\sum\limits_{i=0}^{k-1}c_ic_{i+l}=0, ~1 \leq l \leq \lfloor \frac{k-1}{2} \rfloor$.
\end{enumerate}
\end{theorem}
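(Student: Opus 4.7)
The plan is to invoke Theorem~\ref{A^2 g-circulant structure} with $g \equiv -1 \pmod k$, since a left-circulant matrix is precisely a $g$-circulant matrix for this choice of $g$. Because $g^2 \equiv 1 \pmod k$, the permutation matrix $Q_g^2$ in that theorem is the identity, and so
$$A^2 \;=\; \sum_{l=0}^{k-1} S_l\, P^l, \qquad S_l \;:=\; \sum_{\substack{i,j=0\\j-i \equiv l \pmod k}}^{k-1} c_i c_j \;=\; \sum_{i=0}^{k-1} c_i\, c_{i+l \bmod k}.$$
Since $I, P, P^2, \ldots, P^{k-1}$ have pairwise disjoint supports, they are linearly independent, so $A^2 = I$ is equivalent to $S_0 = 1$ together with $S_l = 0$ for every $1 \le l \le k-1$. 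I would then split the verification into three pieces: the coefficient at $l=0$, the symmetric pairing of $l$ with $k-l$, and (only when $k$ is even) the middle coefficient $l=k/2$.

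For $l=0$ one has $S_0 = \sum_{i} c_i^2 = \bigl(\sum_i c_i\bigr)^2$ in characteristic $2$, so $S_0 = 1$ is equivalent to condition (1). For the symmetric pairing, the substitution $i \mapsto i+l$ in the defining sum gives $S_{k-l} = \sum_{i} c_{i+l}\, c_i = S_l$, so imposing $S_l = 0$ for $1 \le l \le \lfloor (k-1)/2 \rfloor$ as in condition (2) automatically forces $S_l = 0$ for the mirrored indices in the upper half. When $k$ is odd these two ranges already exhaust $\{1, 2, \ldots, k-1\}$, and the proof is complete.

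The only subtle case, which is where I expect the main technical obstacle to lie, is the middle term $l = k/2$ for even $k$, which is not covered by condition (2). Here I would observe that in
$$S_{k/2} \;=\; \sum_{i=0}^{k-1} c_i\, c_{i+k/2}$$
the pairs of indices $\{i, i+k/2\}$ contribute the same quantity $c_i\, c_{i+k/2}$ twice, so $S_{k/2}$ is a sum of doubled terms and therefore vanishes automatically in characteristic $2$. This cancellation is essentially the unique place where the characteristic hypothesis is used; without it, the theorem would need an additional middle-term condition. Combining the three observations yields the claimed "if and only if" characterization.
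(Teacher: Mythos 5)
Your proposal is correct and follows essentially the same route as the paper: both invoke Theorem~\ref{A^2 g-circulant structure} with $g\equiv -1\pmod k$ so that $A^2$ is circulant, identify $S_0=\bigl(\sum_i c_i\bigr)^2$ in characteristic $2$, use the symmetry $S_l=S_{k-l}$ to cut the index range to $1\le l\le\lfloor(k-1)/2\rfloor$, and observe that $S_{k/2}$ vanishes automatically for even $k$ because each unordered pair $\{i,i+k/2\}$ contributes its product twice. (Only a cosmetic quibble: characteristic $2$ is also used at $l=0$ via the freshman's dream, not solely at the middle term.)
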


\begin{proof}
Consider a left-circulant matrix $A$ of order $k \times k$ over the finite field of characteristic $2$ with $k=2^m\prod_{i=1}^l  p_i^{m_i}$ and $p_i$'s are primes. Since $g\equiv-1 \pmod k$, then from Lemma \ref{gh-circulant}, $A^2$ is circulant. Let $A$ be involutory. Then $A^2[0,0]=1$ and $A^2[0,l]=0, 1 \leq l \leq k-1.$ Therefore using Theorem \ref{A^2 g-circulant structure}, $A^2[0,0]$ can be written as:
\begin{align*}
A^2[0,0]=\sum_{\substack{i,j=0,\\gi+j=0 \pmod k}}^{k-1}c_ic_j = \sum_{\substack{i=0,\\gi+i=0 \pmod k}}^{k-1}c_i^2 = (\sum_{i=0}^{k-1}c_i)^2.
\end{align*}
This holds because $gi+j=0 \pmod k$ and  $g\equiv-1 \pmod k$ implies $j=i.$ Thus $ \sum_{i=0}^{k-1} c_i=1.$


The coefficient of $P^l$ is $\sum\limits_{\substack{i,j=0,\\gi+j=l \pmod k, l \neq 0}}^{k-1}c_ic_j$ for $1 \leq l \leq k-1.$ Since $g \equiv-1 \pmod k$, the coefficient can be written as  $\sum\limits_{i=0}^{k-1}c_ic_{i+l}$ for $1 \leq l \leq k-1,$ where the indices are taken modulo $k$. 
 Since $g^2 \equiv 1 \pmod k$, the equation $gi+j=l\pmod k$ can be written as $i+gj=gl=k-l\pmod k.$ Hence the set $\{(i,j): gi+j=l \pmod k\}$ same as the set $\{(i,j): i+gj=k-l\pmod k\}.$
 
Therefore, it is enough to consider first $\lfloor \frac{k-1}{2} \rfloor$ entries of first row of $A^2$, i.e., coefficients of $P^l$ with $ 1 \leq l \leq \lfloor \frac{k-1}{2} \rfloor.$ 
 Note that, when $k$ is even, 
\begin{align*}
A^2\left[0,\frac{k}{2}\right]=\sum_{\substack{i,j=0,\\gi+j=\frac{k}{2} \pmod k}}^{k-1}c_ic_j =\sum_{\substack{i,j=0,\\gi+j=\frac{k}{2} \pmod k, i<j}}^{k-1}2c_ic_j=0 .
\end{align*} 
This holds  because the set $\{(i,j): gi+j=\frac{k}{2} \pmod k\}$ equals to  $\{(i,j):i+gj=g^{-1}\frac{k}{2}=g\frac{k}{2}=\frac{k}{2} \pmod k\}.$ This implies for even $k$, $A^2[0,\frac{k}{2}]$ always $0.$ Therefore the conditions hold. 

Conversely, if the conditions hold, then $A^2[0,0]=1$ and $A^2[0,l]=0$ for $1 \leq l \leq k-1.$ Since $A^2$ is circulant, this implies $A$ is involutory. Hence proved. 
 \end{proof}

\begin{remark}
Theorem \ref{left-circulant MDS inv} provides an efficient criterion for identifying involutory left-circulant matrices.
Instead of computing $A^2$ directly, this result yields explicit expressions for the entries of
$A^2$ in terms of the first row of $A$. Over fields of characteristic $2$, these expressions
simplify, and the involutory property reduces to checking
$1+\lfloor (k-1)/2 \rfloor$ algebraic equations in the coefficients $c_0, c_1, \cdots, c_{k-1}$.
This is substantially simpler than performing full matrix multiplication. Once the involutory property is satisfied, the MDS property can be verified separately by checking the determinants of all square submatrices.
\end{remark}

Consider the following example of left-circulant involutory MDS matrix from \cite{LS}.
\begin{example}
Let $a$ be a primitive element of the finite field $\mathbb{F}_{2^8}$ with the generating polynomial $1+x^2+x^5+x^6+x^8.$ Construct the left-circulant matrix of order $5 \times 5$ with the first row $(1,a,1+a+a^4+a^5+a^7,1+a+a^3+a^4+a^5+a^7,a+a^3).$ Then 
$$\qquad\sum\limits_{i=0}^{k-1} c_i=1+a+1+a+a^4+a^5+a^7+1+a+a^3+a^4+a^5+a^7+a+a^3=1,$$
\begin{align*}
\sum\limits_{\substack{i,j=0,\\4i+j=1 \pmod k}}^{k-1}c_ic_j&=c_0c_1+c_1c_2+c_2c_3+c_3c_4+c_4c_0\\
&=(1\cdot a)+(a\cdot (1+a+a^4+a^5+a^7))+((1+a+a^4+a^5+a^7)\\&\cdot(1+a+a^3+a^4+a^5+a^7))+((1+a+a^3+a^4+a^5+a^7)\\&\cdot(a^3+a))+((a^3+a))\\
&=0,\\
\sum\limits_{\substack{i,j=0,\\4i+j=2 \pmod k}}^{k-1}c_ic_j&=c_0c_2+c_1c_3+c_2c_4+c_3c_0+c_4c_1\\
&=(1\cdot (1+a+a^4+a^5+a^7))+(a\cdot (1+a+a^3+a^4+a^5+a^7))+\\&((1+a+a^4+a^5+a^7)\cdot(a^3+a))+((1+a+a^3+a^4+a^5+a^7))\\&+((a^3+a)\cdot a)\\
&=0.
\end{align*}
Therefore, $A^2=I$ from Theorem \ref{left-circulant MDS inv}. Additionally, all submatrices of $A$ are non-singular, affirming that $A$ is an MDS matrix as well.
\end{example} 

\begin{table*}
\centering
\begin{tabular}{||c c c c ||} 
 \hline
 Type of MDS Matrix & Order & Existence & Reference \\ [0.5ex] 
 \hline\hline
 circulant, involutory  & $n \times n$ & do not exist &  \cite{GR}  \\ 
\hline    
 left- circulant, involutory &  $2^d \times 2^d$ & do not exist & \cite{LS}  \\
 \hline
 left- circulant, involutory & other orders & may exists & \cite{LS}\\
 \hline
 \makecell{g-circulant, involutory,\\
 g odd, $g^2 \equiv 1 \pmod {2^d}$}  
 &  $2^d \times 2^d$ & do not exist & Theorem \ref{2^d g cir non mds} \\
 \hline
 \makecell{g-circulant, involutory,\\
 $\gcd(g,k)=1$, $g^2 \equiv 1 \pmod k$,\\ $g< k-1$}
 & \makecell{$k=2^m\prod_{i=1}^l  p_i^{m_i}$\\$ \ m \geq 0,\ m_i \geq 1$} 
 & do not exist & Theorem \ref{other odd order} \\  
 \hline   
 cyclic, semi-orthogonal &  $2^d \times 2^d$ & do not exist & Theorem \ref{cyclic si} \\
 \hline 
 \end{tabular}\vspace{5pt}

\caption{List of pre-existing results and new results over finite fields of characteristic $2$.
 \label{tab:table 1}}
\end{table*}

Till now, we have discussed 
$
g$-circulant matrices with involutory properties, and the summary is given in Table~\ref{tab:table 1}. In the next section, we focus on 
$
g$-circulant matrices with semi-involutory and semi-orthogonal properties.
\section{\bf g-Circulant matrices with semi-involutory and semi-orthogonal properties} \label{sec:g-circulant MDS semi-involutory}
In this section, we focus on $g$-circulant matrices endowed with semi-involutory and semi-orthogonal properties. In $2022,$ Chatterjee {\it{et al.}} \cite{TA} demonstrated that, for a circulant matrix of order $k \times k$, possessing the semi-orthogonal property leads to the intriguing result that the $k$-th power of the associated diagonal matrices yield a scalar matrix. This finding prompts a natural question: does this distinctive characteristic also hold true for $g$-circulant semi-orthogonal matrices? Our investigation extends to this inquiry, considering the case where $\gcd(g,k)=1$, as this condition proves to be essential for the non-singularity of the matrix in \cite{TA2}.

\begin{theorem}\label{gcircsemiortho 1}
Let $A$ be a $g$-circulant matrix of order $k \times k$ over a finite field $\mathbb{F}^*$ with $\gcd(g,k)=1.$ Then $A$ is semi-orthogonal if and only if there exist non-singular diagonal matrices $D_1, D_2$ such that $D_1^k=k_1I$ and $D_2^k=k_2I$ for non-zero scalars $k_1, k_2$ in the finite field and $A^{-T}=D_1AD_2.$
\end{theorem}

\begin{proof}
Let $A$ be a $g$-circulant matrix with semi-orthogonal property. Then there exists non-singular diagonal matrices $D_1$ and $D_2$ such that $A^{-T}=D_1AD_2.$ Since $A$ is $g$-circulant and $\gcd(g,k)=1$, by Theorem \ref{relation cyclic circulant}, there exists a unique permutation matrix $Q$ such that $AQ=C$, where $C$ is a circulant matrix. Lemma \ref{per equi s.o} implies that $C$ is also semi-orthogonal. According to Theorem \ref{circ semi-ortho}, the associated diagonal matrices $E_1, E_2$ of $C$ satisfy $E_1^k=k_1I$ and $E_2^k=k_2I$ for some non-zero scalars $k_1, k_2$ in the finite field and $C^{-T}=E_1CE_2.$ This implies $(AQ)^{-T}=E_1AQE_2.$ Thus $A^{-T}=E_1AQE_2Q^T.$ Consider $D_1=E_1$, which implies $D_1^k=E_1^k=k_1I.$ Let $D_2=QE_2Q^T.$ If $E_2=$ diagonal$(e_0,e_1,\hdots,e_{n-1})$ and $\sigma \in S_k$ be the permutation associated to $Q$,  
then $D_2$ is also a diagonal matrix with diagonal entries $(e_{\sigma^{-1}(0)},e_{\sigma^{-1}(1)},e_{\sigma^{-1}(2)},\hdots,e_{\sigma^{-1}(k-1)}).$ 
Since $e_{\sigma^{-1}(i)}=e_j$ for $0 \leq i,j \leq k-1$ and $QQ^T=I$, then $D_2^k=(QE_2Q^T)^k=k_2I.$ Hence proved.

Conversely, if there exists non-singular diagonal matrices $D_1,D_2$ such that $D_1^k=k_1I$ and $D_2^k=k_2I$ for non-zero scalars  $k_1,k_2$ in the finite field and $A^{-T}=D_1AD_2$, then by the definition $A$ is semi-orthogonal.
\end{proof}

Recently, Kumar {\it{et al.}}  provided an explicit characterization of the diagonal matrices associated with circulant semi-orthogonal matrices in \cite{KSMG}. The result is following:

\begin{theorem}[{\cite[Theorem 12]{KSMG}}]\label{circ s-o}
Let $A$ be an $n\times n$ semi-orthogonal circulant matrix over the finite field $\mathbb{F}_{p^m}$. Then its associated diagonal matrices $D_1$ and $D_2$ are given by
$
D_1=\alpha_1\operatorname{Diag}(1,\zeta,\zeta^2,\ldots,\zeta^{n-1})
$
and
$
D_2=\beta_1\operatorname{Diag}(1,\zeta^{n-1},\zeta^{n-2},\ldots,\zeta),
$
where $\zeta$ is an $n$-th root of unity in $\mathbb{F}_{p^m}^{\ast}$ and
$\alpha_1,\beta_1\in\mathbb{F}_{p^m}^{\ast}$.
\end{theorem}
Combining this result with Theorem \ref{gcircsemiortho 1}, we obtain an explicit description of the associated diagonal matrices of $g$-circulant semi-orthogonal matrices.

\begin{remark}
Let $A$ be a $g$-circulant semi-orthogonal matrix of order $k\times k$. By Theorem~\ref{gcircsemiortho 1}, the associated diagonal matrices $D_1$ and $D_2$ of $A$ satisfy
$
D_1=E_1
\quad\text{and}\quad
D_2=QE_2Q^T,
$
where $Q$ is a permutation matrix and $E_1,E_2$ are the associated diagonal matrices of the corresponding circulant semi-orthogonal matrix.

Using Theorem~\ref{circ s-o}, it follows that
$
D_1=\alpha_1\operatorname{Diag}(1,\zeta,\zeta^2,\ldots,\zeta^{k-1}),
$
and
$
D_2
=
Q\Bigl(
\beta_1\operatorname{Diag}(1,\zeta^{k-1},\zeta^{k-2},\ldots,\zeta)
\Bigr)Q^T,
$
where $\zeta$ is a $k$-th root of unity in $\mathbb{F}_{p^m}^{\ast}$ and
$\alpha_1,\beta_1\in\mathbb{F}_{p^m}^{\ast}$.
If $\sigma\in S_k$ denotes the permutation corresponding to $Q$, then
$
D_2=
\beta_1\operatorname{Diag}
\bigl(
\zeta^{k-\sigma(0)},
\zeta^{k-\sigma(1)},
\ldots,
\zeta^{k-\sigma(k-1)}
\bigr).
$
Thus, the associated diagonal matrices of a $g$-circulant semi-orthogonal matrix admit a characterization analogous to that of circulant semi-orthogonal matrices, differing only by a permutation of their diagonal entries.
\end{remark}

In \cite{KSMG}, the authors established the non-existence of circulant semi-orthogonal MDS matrices of order a power of $2$ over finite fields of characteristic $2$. Their result is stated below.

\begin{theorem}[{\cite[Theorem 13]{KSMG}}]\label{nonexists1}
Semi-orthogonal circulant matrices of order $2^n$ over $\mathbb{F}_{2^m}$ are non-MDS.
\end{theorem}

Using Theorem~\ref{relation cyclic circulant}, we obtain the following immediate generalization to the class of $g$-circulant matrices.
\begin{theorem}\label{gcirc semi-ortho non mds}
Let $A$ be a $2^d \times 2^d$ $g$-circulant semi-orthogonal matrix  over $F_{2^m}$. Then $A$ is not an  MDS matrix.
\end{theorem}
\begin{proof}
Let $A$ be a $2^d \times 2^d$ $g$-circulant semi-orthogonal matrix over $F_{2^m}$. If $\gcd(g, 2^d)>1$, then by Theorem $4.1$ of \cite{TA2}, $A$ cannot be MDS.

Let $\gcd(g, 2^d)=1$. Since $A$ is a $g$-circulant semi-orthogonal matrix, Theorem~\ref{relation cyclic circulant} guarantees the existence of a unique permutation matrix $Q$ such that
$
AQ=C,
$
where $C$ is a circulant matrix. Moreover, by Lemma~\ref{per equi s.o}, the matrix $C$ is also semi-orthogonal. Hence, $C$ is a $2^d\times 2^d$ circulant semi-orthogonal matrix over $\mathbb{F}_{2^m}$.

Since permutation equivalence preserves the MDS property, it follows from Corollary \ref{mds PE} that $C$ is also an MDS matrix. This contradicts Theorem \ref{nonexists1}.

Therefore,  $A$ cannot be an MDS matrix.
\end{proof}

Since Theorem~\ref{relation cyclic circulant} holds for the broader class of cyclic matrices, which includes $g$-circulant matrices as a special case, the same argument as in the proof of Theorem~\ref{gcirc semi-ortho non mds} establishes the following result.

\begin{theorem}\label{cyclic si}
Let $A$ be a $2^d \times 2^d$ cyclic semi-orthogonal matrix over $\mathbb{F}_{2^m}$. Then $A$ is not an MDS matrix.
\end{theorem}
%
It is worth noting that  semi-orthogonal MDS $g$-circulant matrices of orders other than powers of $2$ exists. The following example illustrates this fact.
 \begin{example}
Consider the $5 \times 5$ matrix $A=$ circulant$(1,1+\alpha+\alpha^3,1+\alpha+\alpha^3,\alpha+\alpha^3,1+\alpha^3+\alpha^4+\alpha^7),$ where $\alpha$ is a primitive element of the finite field $\mathbb{F}_{2^8}$ with the generating polynomial $x^8+x^4+x^3+x^2+1.$  
Note that, $A$ is semi-orthogonal since $A^{-T}=D_1AD_2,$ where $D_1=$ diagonal$(\alpha^2+\alpha, \alpha^7+\alpha^2+1, \alpha^7+\alpha^6+\alpha^5+\alpha^4+\alpha^2,\alpha^5+\alpha^4+\alpha^3+\alpha^2,\alpha^6+\alpha^3+\alpha+1)$ and $D_2=$ diagonal$(\alpha^7+\alpha^6+\alpha^3+\alpha^2+\alpha+1, \alpha^7+\alpha^5+\alpha^3, \alpha^7+\alpha^5+\alpha^4+\alpha^2+1, \alpha^6+\alpha^5+\alpha^2, \alpha^7+\alpha^5+\alpha^4+\alpha^2+\alpha).$ Here $k_1=\alpha^5+\alpha^3+\alpha^2+\alpha$ and $k_2=\alpha^6+\alpha^4+\alpha^3+1.$ $A$ is also an MDS matrix.

\end{example}

We next consider the semi-involutory case. The following theorem provides a generalization of Theorem~\ref{circ semi-inv} to $g$-circulant matrices.

\begin{theorem}\label{gcirc si}
Let $A$ be a $g$-circulant matrix of order $k \times k$ over a finite field $\mathbb{F}^*$ with $\gcd(g,k)=1.$ Then $A$ is semi-involutory if and only if there exist non-singular diagonal matrices $D_1, D_2$ such that $D_1^k=k_1I$ and $D_2^k=k_2I$ for non-zero scalars $k_1, k_2$ in the finite field and $A^{-1}=D_1AD_2.$
\end{theorem}
\begin{proof}
Let $A=g$-circulant$(a_0,a_1,\hdots,a_{k-1})$ and $\rho \in S_k$ be the $k$-cycle associated to $A.$ Assume that $A$ is semi-involutory. This implies the existence of non-singular diagonal matrices $D_1$ and $D_2$ such that $A^{-1}=D_1AD_2.$ Let $D_1=$diagonal$(d_0,d_1,\hdots,d_{k-1})$ and $D_2=$diagonal$(d_0',d_1',\hdots,d_{k-1}'). $ Then the matrix $A^{-1}$ takes the form $$A^{-1}=\begin{bmatrix}
d_0a_0d_0' & d_0a_1d_1' & \cdots & d_0a_{k-1}d_{k-1}' \\
d_1a_{k-g}d_0' & d_1a_{k-g+1}d_1' & \cdots & d_1a_{k-1-g}d_{k-1}' \\
\vdots & \vdots & \cdots & \vdots\\
d_{k-1}a_gd_0' & d_{k-1}a_{g+1}d_1' & \cdots & d_{k-1}a_{g-1}d_{k-1}' 
\end{bmatrix}.$$ Here the suffixes of $a_i$'s are calculated modulo $k.$ Since inverse of a $g$-circulant matrix is $h$-circulant with $gh\equiv 1 \pmod k$, the entries of the second row of $A^{-1}$ are the same as the entries of the first row shifted right by $h$ positions. Since $h< k,$ there exists $l$ such that $0 \leq l \leq k-1$ and $\rho(l)=h.$ Therefore, 
\begin{align*}
d_0a_0d_0'&=d_1a_ld_{h}'\\
d_0a_1d_1'&=d_1a_{l+1}d_{h+1}'\\
& \vdots\\
d_0a_{k-1}d_{k-1}'&=d_1a_{l-1}d_{h-1}'.
\end{align*}
Here all the suffixes of $d_i,~a_i$ and $d_i'$ are calculated modulo $k.$ Note that, the sets $\{l,l+1,\hdots,l-1\}$ and $\{h,h+1,\hdots,h-1\}$ form a complete set of residues modulo $k.$ Then
multiplying all these equalities, we get $d_0^k=d_1^k.$ Similarly, entries of the third row are the same as entries of the second row right shifted by $h$ positions, and that implies $d_1a_{\rho^{-1}(i)}d_{i}'=d_2a_{\rho^{-1}(h+i)}d_{h+i}'$ for $i=0,\hdots,k-1$, and the indices are reduced modulo $k$, which leads to $ d_2^k=d_3^k.$ Continuing this process, we get $d_1^k=d_2^k=d_3^k=d_4^k=\cdots=d_{k}^k.$ 
Moreover, in $A^{-1},$ the second column is $g$-shift of the first column by Lemma \ref{transpose g circ}. Therefore $d_0a_0d_0'=d_ha_{k-hg+1}d_1',~d_1a_{k-g}d_0'=d_{h+1}a_{k-(h+1)g+1}d_1',\hdots,d_{k-1}a_gd_0'=d_{h+(k-1)}a_{k-(h-1)g+1}d_1'.$ Multiplying these equations we get $d_0'^k=d_1'^k.$ Applying the same reasoning for the second and the third columns, we get $d_1'^k=d_2'^k.$ Continuing in a similar manner, we conclude that $d_0'^k=d_1^k=d_2'^k=d_3'^k=\cdots=d_{k-1}'^k.$

Conversely, if there exists non-singular diagonal matrices $D_1,D_2$ such that $D_1^n=k_1I$ and $D_2^n=k_2I$ for non-zero scalars  $k_1,k_2$ in the finite field and $A^{-1}=D_1AD_2$, then by the definition $A$ is semi-involutory.
\end{proof}

Kumar {\it et al.} \cite{KSMG} also provided an explicit characterization of the diagonal matrices associated with circulant semi-involutory matrices. Their result is stated below.

\begin{theorem}[{\cite[Theorem 10]{KSMG}}] \label{cic si}
Let $A$ be an $n\times n$ semi-involutory circulant matrix over the finite field $\mathbb{F}_{p^m}$. Then its associated diagonal matrices $D_1$ and $D_2$ are given by
$
D_1=\alpha_1\operatorname{Diag}(1,\zeta,\zeta^2,\ldots,\zeta^{n-1})
$
and
$
D_2=\beta_1\operatorname{Diag}(1,\zeta^{n-1},\zeta^{n-2},\ldots,\zeta),
$
where $\zeta$ is an $n$-th root of unity in $\mathbb{F}_{p^m}^{\ast}$ and
$\alpha_1,\beta_1\in\mathbb{F}_{p^m}^{\ast}$.
\end{theorem}

\begin{remark}
For a $g$-circulant semi-involutory matrix, the associated diagonal matrices need not be generated by powers of a single root of unity. Nevertheless, they can still be described in terms of $k$-th roots of unity.

From the proof of Theorem~\ref{gcirc si} we obtain
$
d_0^k=d_1^k=\cdots=d_{k-1}^k
$
and
$
(d_0')^k=(d_1')^k=\cdots=(d_{k-1}')^k.
$
Consequently,
$
\left(\frac{d_i}{d_0}\right)^k=1
\quad\text{and}\quad
\left(\frac{d_i'}{d_0'}\right)^k=1
$
for every $i=1,2,\ldots,k-1$.

Define
\[
\zeta_i=\frac{d_i}{d_0}
\qquad\text{and}\qquad
\eta_i=\frac{d_i'}{d_0'},
\]
for $i=1,2,\ldots,k-1$. Then
$
\zeta_i^k=\eta_i^k=1,
$
so that each $\zeta_i$ and $\eta_i$ is a $k$-th root of unity in $\mathbb{F}_{p^m}^{\ast}$.

Therefore,
$
D_1
=
d_0\,\operatorname{Diag}
(1,\zeta_1,\zeta_2,\ldots,\zeta_{k-1}),
$
and
$
D_2
=
d_0'\,\operatorname{Diag}
(1,\eta_1,\eta_2,\ldots,\eta_{k-1}),
$
where
$
\zeta_i^k=\eta_i^k=1,
\quad i=1,2,\ldots,k-1.
$

\end{remark}

\begin{example}
Consider the $2 \times 2$ matrix $A=$ circulant$(1,a^2)$, where $a$ is a primitive element of the finite field $\mathbb{F}_{2^2}$ with the generating polynomial $x^2+x+1.$ Note that, $A$ is semi-involutory since $A^{-1}=D_1AD_2,$ where $D_1=$ diagonal$(a,a)$ and $D_2=I_{2 \times 2}.$ Here $k_1=a+1$ and $k_2=1.$ $A$ is also an MDS matrix. 
\end{example}
\begin{example}
Consider the $4 \times 4$ matrix $A=$ circulant$(a,a^3,a^2+a+1,a^3)$, where $a$ is a primitive element of the finite field $\mathbb{F}_{2^4}$ with the generating polynomial $x^4+x+1.$ Note that, $A$ is semi-involutory since $A^{-1}=D_1AD_2,$ where $D_1=$ diagonal$(a^3+1,a^3+1,a^3+1,a^3+1)$ and $D_2=I_{4 \times 4}.$ Here $k_1=a^3+a^2+a$ and $k_2=1.$ 
\end{example}

\section{\bf Conclusion}

This article offers a comprehensive exploration of $g$-circulant involutory matrices. We begin with a rigorous theoretical proof establishing the non-existence of involutory maximum distance separable matrices of order $2^d \times 2^d$ within a specific subclass of cyclic matrices. Additionally, we conduct a thorough analysis of $g$-circulant semi-orthogonal and semi-involutory matrices.
In the context of these matrices, we extend results from the circulant to the $g$-circulant case. Specifically, we demonstrate that the $k$-th power of the associated diagonal matrices of order $k \times k$ yields a scalar matrix, a property previously established for circulant matrices, and we generalize this characteristic to $g$-circulant cases as well.

\textbf{Data Availability}
No datasets were generated or analysed during the current study.

\textbf{Competing interests} The authors declare no competing interests.

\bibliographystyle{plain}

\end{document}